\newcommand{\LeftEqNo}{\let\veqno\@@leqno}
\theoremstyle{definition}
\newtheorem{thm}{Theorem}
\newtheorem{rem}[thm]{Remark}
\newtheorem{defi}[thm]{Definition}
\newcommand\eu{{\rm e}}
\newcommand{\widebar}[1]{\mbox{\kern1pt\hbox{
\vbox{\hrule height 0.5pt \kern0.25ex
        \hbox{\kern-0.05em \ensuremath{#1 }\kern-0.05em}}}}\kern-0.1pt}
\definecolor{darkblue}{RGB}{40,0,200}
\newlength{\fixboxwidth}
\renewcommand{\rho}{{\varrho }}
\title{Probabilistic Matrix Factorization with Personalized Differential Privacy} 
\author{Shun Zhang,\ \
Laixiang Liu,\ \ Zhili Chen\footnote{Corresponding author.
\newline\indent\, Email addresses: shzhang27@163.com (S. Zhang), ahu\_laixiangliu@163.com (L. Liu),
zlchen@ahu.edu.cn (Z. Chen), zhongh@ahu.edu.cn (H. Zhong).} ,\ \
Hong Zhong \ \
\\ {\footnotesize \rm School of Computer Science and Technology, Anhui
University, Hefei 230601, China}
 }
\begin{document}

\maketitle

\begin{abstract}
Probabilistic matrix factorization (PMF) plays a crucial role
in recommendation systems. It requires a large amount of user data
(such as user shopping records and movie ratings) to predict personal preferences, and
thereby provides users high-quality recommendation services,
which expose the risk of leakage of user privacy.
Differential privacy, as a provable privacy protection framework,
has been applied widely to recommendation systems. It is common that different individuals have different levels of privacy requirements on items. However,  traditional differential privacy
can only provide a uniform level of privacy protection for all users.

In this paper, we mainly propose a probabilistic matrix factorization recommendation scheme
with personalized differential privacy (PDP-PMF). It aims to meet users' privacy requirements
specified at the item-level instead of giving the same level of privacy guarantees for all. We
then develop a modified sampling mechanism (with bounded differential privacy) for achieving PDP.
We also perform a theoretical analysis of
the PDP-PMF scheme and demonstrate the privacy of the PDP-PMF scheme.
In addition, we implement the probabilistic matrix factorization schemes both
with traditional and with personalized differential privacy (DP-PMF, PDP-PMF) and compare them through a series of experiments.
The results show that the PDP-PMF scheme performs well on protecting the privacy of each user
and its recommendation quality is much better than the DP-PMF scheme.
\end{abstract}
\noindent{\bf Key words:}\, Personalized differential privacy;
Recommendation system;
Probabilistic matrix factorization.

\section{Introduction}

In the last decade, with the rapid development of web-based applications, information has grown explosively.
It is impossible for people to extract relevant data by exploring all web content. In order to provide
 personalized recommendation services, recommendation systems \cite{KR12} are used widely and promoted rapidly. For building
recommendation systems, probabilistic matrix factorization (PMF) is a prevailing method \cite{SM08} that performs well on large and sparse datasets.
This model regards the user
preference matrix as a product of two lower-rank user and item matrices and adopts a probabilistic linear setting with Gaussian observation
noise.
Training such a model amounts to finding the best
low-rank approximation to the observed target matrix under a given loss function.
During the whole recommendation process, a large amount of
user information has to be utilized to train the recommendation algorithm.
Such information may include users' privacy (such as shopping records and
rating records), and there is a probability of information leakage. Therefore, such a recommendation system
is a double-edged sword. It may promote the rapid development of the internet economy
and there is a risk of leaking sensitive information.

Differential privacy (DP) \cite{Dwo08} is a provable and strict privacy protection framework. Usually, it keeps
private information confidential by perturbing the output of the algorithm with noise.
So far, the DP method has been widely applied to recommendation algorithms \cite{FM09,ABK16}.
It is worth noting that the objective perturbation method proposed in \cite{CM09}
is utilized in \cite{HXZ15} to achieve (traditional) differential privacy for matrix factorization recommendations. We find that this traditional DP method
is also fit for the PMF scheme. Our main idea, for the probabilistic matrix factorization scheme with traditional differential privacy (DP-PMF), is firstly to obtain the user profile matrix
through the PMF algorithm without privacy protection and keep it confidential. Then the user profile matrix
is input as a constant into the objective (perturbation) function (different from \cite{HXZ15}), and the resulting perturbed
item profile matrix satisfies $\epsilon$-differential privacy ($\epsilon$-DP). Finally, the item profile
matrix is released. This makes it impossible for an attacker to guess specific user's private rating data
based on the recommendation results. Thereby the user's private information is protected.

However, the traditional differential privacy is aimed at providing the same level of privacy protection for all users.
This ``one size fits all'' approach ignores the reality that:  users' expectations on privacy protection are not uniform for various items in general \cite{BGS05,JYC15}. Indeed, users would like to choose freely different levels of privacy protection for their ratings and other data.
Unfortunately, the traditional differential privacy method can only provide the highest level (the smallest $\epsilon$)
privacy protection for all users in this case, so as to meet the privacy requirements of all users \cite{JYC15}.
This generates excessive noise and thus impairs seriously
the accuracy of the recommendation results. Afterwards it is impossible to provide users with
high-quality recommendation services. To address this problem, personalized differential privacy (PDP) was introduced into recommendation systems  in \cite{LLW17,YZX17}
where a PDP protection
framework was proposed to protect user-based collaborative filtering recommendation algorithms.
Further, due to the sparsity of the user-item rating matrix, user-based collaborative filtering algorithms therein may fail to find similar users \cite{MYL08,WVR06}.
Thus in this case the accuracy of recommendation is probably poor. It has been noted that the PMF model can overcome the above problem and get a better accuracy \cite{SM08}.

In this paper, we introduce the concept of personalized differential privacy (PDP) to the task of PMF. The proposed
probabilistic matrix factorization recommendation scheme with personalized differential privacy (PDP-PMF)
achieves item-level privacy protection. It meets users' requirements for different levels of privacy protection on various items, while achieving a good recommendation quality. The main contributions are as follows:
\begin{enumerate}[(1)]
  \item
  We consider the scenario that every user may have potentially different privacy requirements for all items and the recommender is trustworthy.
The goal here is to ensure that the recommendation scheme designed satisfies
 personalized differentially private requirements $\epsilon_{ij}$.
We present a PDP-PMF scheme using a so-called bounded sample mechanism, which guarantees users' privacy requirements and achieves item-level privacy protection as desired.

  \item We conduct a theoretical analysis of PDP-PMF as well as DP-PMF. Based on the definition of
  personalized differential privacy that was first presented in \cite{JYC15}, we prove that
  the PDP-PMF scheme meets personalized differential privacy and guarantees the privacy of the whole model.
  \item We carry out multiple sets of experiments on three real datasets for comparisons of the PDP-PMF scheme with DP-PMF.
 The experimental results show that PDP-PMF protects the data privacy of each user well and the recommendation quality is much better than that of DP-PMF.
\end{enumerate}

The remainder of this paper is organized as follows. In Section 2, we briefly recall the related work. Section 3 introduces probabilistic matrix factorization technique, (personalized) differential privacy and our setting for the schemes to design. We provide the detailed design of our traditional differentially private recommendation scheme in Section 4. In Section 5, the crucial part of the
work will be done, we present the personalized differentially private recommendation scheme. Section 6 demonstrates and analyzes the performance of both traditional and personalized differentially private schemes with experiments on three public datasets. Finally, we conclude this paper in Section 7.


\section{Related work}
\textbf{Traditional Privacy-preserving Recommendations.}
In recommendation systems, the traditional ways to protect user privacy data usually include
cryptography and obfuscation based approaches. The cryptology based approach is suitable for the protection of recommendation systems
composed of mutually distrustful multiparty, and it normally results in significantly high computational costs \cite{Can02,AS11}, especially when
the amount of historical data is large. On the other hand, the obfuscation based method can improve the computation efficiency effectively,
but it can not resist the attack of background knowledge and may weaken the utility of data \cite{BEK07,PB07}.


\textbf{Differentially Private Recommendations.}
Differential privacy proposed by Dwork becomes a hot privacy framework recently, and many papers \cite{Dwo08,POV17,ZQZ17,XWG11} have been published on
differential privacy.
Some literatures aimed to utilize differential privacy to protect recommendations, and demonstrated that the
differentially private recommendation can solve perfectly some traditional privacy-preserving problems.
In particular, McSherry et al. \cite{FM09} first applied
differential privacy to collaborative filtering recommendation algorithms. They perturbed
in calculating the average score of the movie, the user's average score, and the process of
constructing the covariance matrix. Then these perturbed matrices were published and used
in recommendation algorithms. Friedman et al. \cite{ABK16} introduced the differential privacy concept to
the matrix factorization recommendation algorithm. They mainly studied four different perturbation schemes
in the matrix factorization process: Input Perturbation, Stochastic Gradient Perturbation, Output Perturbation
and ALS with Output Perturbation. In contrast, the method of perturbation in matrix factorization proposed by
Friedman et al. \cite{ABK16} is similar to the method from McSherry et al. They all use
the Laplace noise to perform perturbation, and the results of the perturbation are clamped. However,
for such perturbation schemes, the sensitivity of the parameters controlling
the amount of noise is too large and has a great influence on the recommendation results. Zhu et al. \cite{ZRZ14}
applied differential privacy to a neighborhood-based collaborative filtering algorithm. They added Laplace noise
when calculating project or user similarity and used differential privacy for protection during neighbor selection.
Hua et al. \cite{HXZ15} proposed the objective function perturbation method. They perturbed the objective function
in the process of matrix factorization so that the perturbed item profile matrix satisfies the differential privacy.
Then the user profile matrix was published and used for user rating prediction.

\textbf{Personalized Differentially Private Recommendations.}
As for the differentially private recommendations mentioned above, users are usually provided with
the same level of privacy protection that does not care about  users' personal privacy needs. For this,
Li et al. \cite{LLW17} and Yang et al. \cite{YZX17}
adopted personalized differential privacy in recommendation systems.
They studied
user-based collaborative filtering recommendation algorithms. Indeed,
the algorithm is improved in \cite{LLW17} to make
recommended results more accurate. However, user-based collaborative filtering methods
cannot handle the user-item rating matrix with high sparsity, which reduces the accuracy of recommendation. Instead, we adopt the PMF method.

This work is based on perturbation of the objective function proposed by Hua et al. \cite{HXZ15}, and applies
the personalized differentially private sample mechanism proposed by Jorgensen et al. \cite{JYC15}. We make use of
the popular probabilistic matrix factorization model to design a personalized
differential privacy recommendation scheme PDP-PMF. We mention that
the sample mechanism proposed by Jorgensen et al. \cite{JYC15} cannot be directly applied here and
some adjustments are made later on. Therefore, in our scheme,
users can choose privacy protection levels for different projects accordingly,
while the amount of noise is reduced to enhance the accuracy of the recommendation results.



\section{Preliminaries}\label{sec:prelim}

In this section, we introduce some notations and initial definitions, and review
 probabilistic matrix factorization technique, (personalized)
differential privacy and the setting upon which our work is based.

\subsection{Probabilistic matrix factorization}

In our setting, there are $\textit{M}$ movies in the item
set $\mathcal{I}$ and $\textit{N}$ users in the user set $\mathcal{U}$.
Any user's rating for the movies is an integer in the range $[1,\textit{K}]$, where $(\textit{K}$ is a constant and called the upper limit of the score. We denote by $R=[r_{ij}]_{N\times M}$ the $N\times M$ preference matrix where the element
$r_{ij}$ represents the user $\tau_{i}$'s rating on the movie $\ell_{j}$. In general, $R$ is a
sparse matrix, which means that most users only rate a few movies relatively. The purpose of a recommendation system is
to predict the blank ratings in the user-item matrix $R$ and then
provide recommendation services for users.

As a probabilistic extension of the SVD model, probabilistic matrix factorization \cite{SM08} is a state-of-the-art technique to solve the above problem.
In the PMF model, the matrix $R$ would be given by the product of an $N\times d$ user coeficient matrix $U^T$ and a $d\times M$ item profile matrix $V$, where $U=[u_i]_{i\in [N]}$ and $V=[v_j]_{j\in [M]}$. These $d$-dimensional column vectors, $u_{i}$
and $v_{j}$, represent user-specific and item-specific latent profile vectors, respectively. Then the rating $r_{ij}$ is approximated by the product $u_i^T v_j$.
In general, the dimension $d$ is between $20$ and $100$.

Following literally \cite[Section 2]{SM08}, we add some Gaussian observation noise by defining a conditional distribution over the observed rating and placing zero-mean spherical Gaussian priors on all user and item profile vectors. The so-called maximization of the log-posterior over item and user profiles therein is equivalent to
the minimization of the sum-of-squared-errors objective function with quadratic regularization terms as follows,
\begin{equation}\label{eq:earmin}
\min_{U,V}E(U,V)\;=\;\frac{1}{2}\sum^{N}_{i=1}\sum^{M}_{j=1}I_{ij}
(r_{ij}-u_{i}^{T}v_{j})^{2}+\frac{\lambda_{u}}{2}\sum_{i=1}^{N}\|u_{i}\|_{2}^{2}
+\frac{\lambda_{v}}{2}\sum_{j=i}^{M}\|v_{j}\|_{2}^{2},
\end{equation}
where $\lambda_{u}>0$ and $\lambda_{v}>0$ are regularization parameters,
and $I_{ij}$ is an indicator function that is $I_{ij}=1$ when user $\tau_{i}$ rated item $\ell_{j}$ and otherwise $I_{ij}=0$. For each vector, we denote by $\|\cdot\|_{2}$
its Euclidean norm. As is mentioned in \cite{CM09,HXZ15}, it is reasonable to suppose that each column vector $u_i$ in $U$ satisfies $\|u_i\|_2\le 1$.

We use the stochastic gradient descent (SGD) method \cite{NIW13} to solve \eqref{eq:earmin}.
The profile matrices $U$ and $V$ are iteratively learned by the following rules:
\begin{equation}
  u_{i}(k)\;=\;u_{i}(k-1)-\gamma\cdot\nabla_{u_{i}}E(U(k-1),V(k-1)),
\end{equation}
\begin{equation}
  v_{j}(k)\;=\;v_{j}(k-1)-\gamma\cdot\nabla_{v_{j}}E(U(k-1),V(k-1)),
\end{equation}
where $\gamma>0$ is a learning rate parameter, $k$ is the number of iterations, and
\begin{equation}
  \nabla_{u_{i}}E(U,V)=-\sum_{j=1}^{M}I_{ij}(r_{ij}-u_{i}^{T}v_{j})v_{j}+\lambda_{u}u_{i},
\end{equation}
\begin{equation}
  \nabla_{v_{j}}E(U,V)=-\sum_{i=1}^{N}I_{ij}(r_{ij}-u_{i}^{T}v_{j})u_{i}+\lambda_{v}v_{j}.
\end{equation}
The initial $U(0)$ and $V(0)$ consist of uniformly random norm $1$ rows.

\subsection{Differential privacy}

Differential privacy \cite{Dwo08,DMN06,DR14} is a new privacy framework based on data distortion.
By adding controllable noise to the statistical results of the data,
DP guarantees that calculation results are not sensitive to any particular record in datasets.
In particular, adding, deleting or modifying any record in the dataset would not have significant influence on the results of statistical calculation based on the dataset.
Thus, as long as any adversary does not know the sensitive record in the dataset, the protection of the sensitive one can remain.

For applying DP, a crucial choice is the condition under which the datasets, $D$ and $D^\prime$, are considered to be neighboring.
In our scheme,  the user rating datasets, $D$ and $D^\prime$ are \emph{neighboring}
if $D$ can be obtained from $D^\prime$ by adding or removing one element.
Note that such a definition is usually adopted in \emph{Unbounded DP} \cite{LLS16}. In this paper, however, we mainly denote datasets by matrices with setting all unselected (and blank) ratings to zero. This yields that two neighboring datasets differ only at one record, in which place one (nonzero) rating is selected in a dataset (matrix) and unselected (zero) in the neighboring dataset (matrix). Due to such matrix representations, \emph{bounded DP} is applied herein and any replacement between nonzero ratings is not allowed.


\begin{defi}[Differential Privacy \cite{Dwo08}]\label{def:difpri}
A randomized algorithm $\mathcal{A}$ satisfies $\epsilon$-differential privacy, where $\epsilon\ge 0$, if and only if for
any two neighboring datasets $D$ and $D^\prime$, and for any subset $O$ of possible
outcomes in Range$(\mathcal{A})$,
\begin{equation}
  {\rm Pr}[\mathcal{A}(D)\in O]\leq \eu^{\epsilon} \cdot {\rm Pr}[\mathcal{A}(D^\prime)\in O].
\end{equation}
\end{defi}
\subsection{Personalized differential privacy}

In this section, we focus on the concept of \emph{Personalized Differential
Privacy (PDP)}. In Section 5 we present a new sample mechanism that satisfies the definition.

While in traditional DP the privacy guarantee is controlled by a uniform, global privacy
parameter (see Definition \ref{def:difpri}), PDP deploys a privacy
specification, in which each user-item pair in $\mathcal{U}\times \mathcal{I}$ independently specifies
the privacy requirement for their rating data. Formally,

\begin{defi}[Privacy Specification \cite{JYC15}]\label{def:prispec}
A privacy specification is a mapping $\mathcal{P}:\mathcal{U}\times \mathcal{I}\rightarrow \mathbb{R}_{+}^{N\times M}$
from user-item pairs to personal privacy preferences, where a smaller value represents a stronger privacy preference.
The notation $\epsilon_{ij}$ indicates the level of privacy specification of the user $\tau_{i}$ on the item $\ell_{j}$.
\end{defi}

In this paper, we use a matrix to describe a specific instance of a privacy specification,
e.g. $\mathcal{P}:=[\epsilon_{ij}]_{N\times M}$, where
$\epsilon_{ij}\in \mathbb{R}^{+}$. We assume that each user $\tau_{i}\in \mathcal{U}$ has an independent
level of privacy demand $\epsilon_{ij}$ for each item $\ell_{j}$, and that the default level of a relatively weak
privacy requirement is fixed as $\epsilon_{\rm def}=1.0$.
Indeed we are mainly intended to achieve $\epsilon_{ij}$-DP for each valid rating $r_{ij}$ while the $\epsilon_{ij}$ for those user-item pairs without valid ratings are meaningless.
  We are now ready to formalize our personalized privacy definition.

\begin{defi}[Personalized Differential Privacy \cite{JYC15}]\label{def:perdp}
In the context of a privacy specification $\mathcal{P}$ and and a universe of
user-item pairs $\mathcal{U}\times \mathcal{I}$,
an arbitrary random algorithm $\mathcal{A}:\mathcal{D}\rightarrow {\rm Range}(\mathcal{A})$ satisfies $\mathcal{P}$-personalized differential privacy ($\mathcal{P}$-PDP),
if for any two neighboring datasets $D, D^\prime\subseteq\mathcal{D}$ and all possible output sets $O\subseteq {\rm Range}(\mathcal{A})$,
\begin{equation}
  {\rm Pr}[\mathcal{A}(D)\in O]\leq \eu^{\epsilon_{ij}} \cdot {\rm Pr}[\mathcal{A}(D^\prime)\in O],
\end{equation}
where $\epsilon_{ij}$ denotes the level of privacy requirement of user $\tau_{i}$ for item $\ell_{j}$, also
referred to as privacy specification.
\end{defi}
\subsection{Setting}

In this paper, we consider a recommendation system associated with two kinds of actors,
users and the recommender. The recommender is assumed to be reliable,
which means that the recommender
does not abuse the user's rating data to gain profit.
Different users may have different privacy requirement for various items, which
means that users want to protect the rating data by individual privacy specification according to one's own will. As usual, each value of user's ratings is an integer between $1$ and $K$,
and even the value of $K$ is $5$ or $10$.

We are intended to design a personalized differentially private PMF scheme to meet each
user's personalized privacy requirement and ensure the accuracy of the
recommendation.
Moreover, we publish only
the perturbed item profile matrix $V$ but not the user profile matrix $U$ that must be kept confidential.
Otherwise, attackers can predict the user $\tau_i$'s preferences for all items locally by the user profile vector $u_i$ and the published matrix $V$.

\section{DP-PMF scheme}\label{sec:dpsch}


In this section, we design a differentially private recommendation scheme with a uniform privacy level $\epsilon$. This scheme then can serve as the basis of our scheme with personalized differential privacy in the next section.

In order to achieve these privacy goals, we apply the objective-perturbation
method proposed in \cite{HXZ15} to the probabilistic matrix factorization
technique, and perturb randomly the objective function instead of the output of the algorithm, to preserve the privacy of users. In our scenario, the perturbation of the objective
\eqref{eq:earmin} is as follows:
\begin{equation}\label{eq:pertobj}
  \min_{V}\widetilde{E}(V)=\frac{1}{2}\sum^{N}_{i=1}\sum^{M}_{j=1}I_{ij}(r_{ij}-u_{i}^{T}v_{j})^{2}
  +\frac{\lambda_{u}}{2}\sum^{N}_{i=1}\|u_{i}\|_{2}^{2}
  +\frac{\lambda_{v}}{2}\sum^{M}_{j=1}\|v_{j}\|_{2}^{2}
  +\sum^{M}_{j=1}\eta_{j}^{T}v_{j},
\end{equation}
with $Q=[\eta_{j}]_{j\in [M]}$ denoting the $d\times M$ noise matrix for the objective perturbation.

The detailed steps for the differentially private probabilistic matrix factorization
scheme are as follows:
\begin{enumerate}[Step 1:]
\item The recommender runs the original recommendation algorithm to solve
the objective function \eqref{eq:earmin} for obtaining the user profile matrix $U$
and store $U$ in private.
\item The user profile matrix is regarded as a constant of the perturbation
objective function \eqref{eq:pertobj} to achieve the item profile matrix
$\overline{V}$.
\end{enumerate}

In our scenario, the minimal valid rating $r_{\rm min}$ is usually a positive integer while those unselected and blank ratings are initially set to zero.

\begin{thm} \label{thm:DPperobj}
Suppose the noise vector $\eta_{j}\in Q$ in~\eqref{eq:pertobj}
is generated randomly and satisfies the probability density function
$P(\eta_{j})\propto e^{-\frac{\epsilon\|\eta_{j}\|_2}{\Delta}}$ with $\Delta=r_{\rm max}$.
Then, the item profile matrix $V$ derived from solving the objective function \eqref{eq:pertobj}
satisfies $\epsilon$-differential privacy.
\qed
\end{thm}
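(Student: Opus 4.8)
The plan is to follow the objective-perturbation paradigm of \cite{HXZ15,CM09}, exploiting that, once the user profile matrix $U$ is regarded as the fixed constant input of \eqref{eq:pertobj}, the perturbed objective $\widetilde E$ is \emph{separable and quadratic} in the columns $v_j$ of $V$: one has $\widetilde E(V)=\mathrm{const}+\sum_{j=1}^{M}g_j(v_j)$ with $g_j(v)=\tfrac12\sum_{i}I_{ij}(r_{ij}-u_i^{T}v)^2+\tfrac{\lambda_v}{2}\|v\|_2^2+\eta_j^{T}v$, and each $g_j$ involves only the $j$-th column of the rating data together with the independent noise $\eta_j$. Hence the minimizer $\overline V$ has independent columns $v_j^{\star}=\arg\min g_j$, and it suffices to compare, column by column, the conditional density of $v_j^{\star}$ under two neighbouring matrices $D,D'$ and then multiply. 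Since $D$ and $D'$ differ in exactly one entry $(i_0,j_0)$ --- one of the two values being $0$, the other a valid rating $r\in[r_{\rm min},r_{\rm max}]$ --- the columns $v_j^{\star}$ with $j\neq j_0$ have identical conditional densities under $D$ and $D'$, so only column $j_0$ is at stake.

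Next I would fix column $j_0$ and carry out a change of variables. Because $\lambda_v>0$, $g_{j_0}$ is strictly convex, so it has a unique minimizer and the map $\Phi:\eta_{j_0}\mapsto v_{j_0}^{\star}$ is a bijection of $\R^d$; the stationarity condition $\nabla g_{j_0}(v_{j_0}^{\star})=0$ gives $\eta_{j_0}=-\nabla h(v_{j_0}^{\star})$, where $h$ denotes the noise-free part of $g_{j_0}$, and since $h$ is quadratic in $v$ this inverse map is affine with constant Jacobian $-\nabla^{2}h=-\big(\sum_i I_{ij_0}u_iu_i^{T}+\lambda_v\mathbb{I}\big)$. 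The crucial point is that $\nabla^{2}h$ is \emph{the same matrix for $D$ and $D'$}: under the matrix/bounded-DP encoding the disputed entry contributes the same quadratic-in-$v$ term $(u_{i_0}^{T}v)^2$ regardless of whether its value is $0$ or $r$, so the two Jacobians cancel in the density ratio. Denoting by $\eta_{j_0}^{D}(v)$ (resp.\ $\eta_{j_0}^{D'}(v)$) the unique noise value making $v$ the minimizer under $D$ (resp.\ $D'$), the change-of-variables formula then yields
\[
\frac{f(v\mid D)}{f(v\mid D')}=\frac{P\big(\eta_{j_0}^{D}(v)\big)}{P\big(\eta_{j_0}^{D'}(v)\big)}=\exp\!\Big(\tfrac{\epsilon}{\Delta}\big(\|\eta_{j_0}^{D'}(v)\|_2-\|\eta_{j_0}^{D}(v)\|_2\big)\Big)\le\exp\!\Big(\tfrac{\epsilon}{\Delta}\,\big\|\eta_{j_0}^{D}(v)-\eta_{j_0}^{D'}(v)\big\|_2\Big),
\]
where we used that the normalising constant of $P$ is data-independent and the triangle inequality.

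Finally I would bound the sensitivity appearing in the exponent. Subtracting the two stationarity conditions, every term except the $(i_0,j_0)$ one cancels --- including the quadratic-in-$v$ terms --- leaving $\eta_{j_0}^{D}(v)-\eta_{j_0}^{D'}(v)=r\,u_{i_0}$, whose Euclidean norm is $r\,\|u_{i_0}\|_2\le r_{\rm max}=\Delta$ by the standing assumption $\|u_{i_0}\|_2\le1$. Hence $f(v\mid D)\le e^{\epsilon}f(v\mid D')$ for column $j_0$ and equality for the others, so $f(\overline V\mid D)\le e^{\epsilon}f(\overline V\mid D')$ pointwise; integrating over any measurable output set $O$ gives ${\rm Pr}[\mathcal{A}(D)\in O]\le e^{\epsilon}{\rm Pr}[\mathcal{A}(D')\in O]$, i.e.\ $\epsilon$-differential privacy. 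I expect the main obstacle to be exactly the Jacobian step: one must check that the Hessian of the noise-free objective is unchanged between neighbours --- which is precisely what the matrix encoding with the quadratic loss buys (with an ``add/remove'' encoding that deleted the whole loss term one would instead incur a determinant ratio and need an extra regulariser, as in \cite{CM09}) --- and one should also confirm that $\Phi$ is onto $\R^d$ so the change of variables is legitimate.
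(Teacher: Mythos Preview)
Your argument is correct and follows essentially the same route as the paper's proof: both exploit the separability of \eqref{eq:pertobj} in the columns $v_j$, use the stationarity condition to identify the noise vectors $\eta_{j_0}^{D}(v),\eta_{j_0}^{D'}(v)$ that yield the same minimizer, compute their difference as $(r_{i_0j_0}-r'_{i_0j_0})u_{i_0}$, and bound the density ratio via the triangle inequality and $\|u_{i_0}\|_2\le 1$. Your treatment is in fact more rigorous than the paper's, which passes directly from $P(\eta_j)$ to the density of $\bar v_j$ without discussing the change of variables; you correctly isolate the key point that the Hessian $\sum_i I_{ij_0}u_iu_i^{T}+\lambda_v\mathbb{I}$ is identical under $D$ and $D'$ (the paper implicitly enforces this by using the same indicator $I_{ij}$ for both datasets in its equation \eqref{eq:pdao}), so the Jacobian factors cancel.
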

The proof can be finished in the same manner as in
the proof of Theorem 1 in \cite{HXZ15}.

\begin{proof}
Without loss of generality, we assume that $D=[r_{ij}]_{N\times M}$ and $D^\prime=[r^\prime_{ij}]_{N\times M}$
are two arbitrary neighboring datasets with only one score data different $r_{11}\neq r^\prime_{11}$. Let $Q=[\eta_{j}]_{j\in [M]}$
and $Q^\prime=[\eta^\prime_{j}]_{j\in [M]}$ be the noise matrices trained by~\eqref{eq:pertobj} using datasets $D$ and $D^\prime$, respectively.
Let $\overline{V}=[\bar{v}_{j}]_{j\in [M]}$ is the item profile matrix obtained
by minimizing~\eqref{eq:pertobj}.

We have, $\nabla_{v_{j}}\widetilde{E}(D,\bar{v}_{j})=\nabla_{v_{j}}\widetilde{E}(D^\prime,\bar{v}_{j})=0$
for any $j\in [M]$. Then
\begin{equation}\label{eq:pdao}
  \eta_{j}-\sum^{N}_{i=1}I_{ij}(r_{ij}-u_{i}^{T}\bar{v}_{j})u_{i}
  =\eta^\prime_{j}-\sum^{N}_{i=1}I_{ij}(r^\prime_{ij}-u_{i}^{T}\bar{v}_{j})u_{i}.
\end{equation}

If $j\neq 1$, we have $r_{ij}=r^\prime_{ij}$ and drive from \eqref{eq:pdao} that
$$\eta_{j}=\eta^\prime_{j}.$$
Hence, $\forall j\neq 1$, $\|\eta_{j}\|_2=\|\eta^\prime_{j}\|_2$.

If $j=1$ and $I_{11}=0$, we have by \eqref{eq:pdao} that
$$\eta_{j}=\eta^\prime_{j}.$$
Thus, $\|\eta_{j}\|_2=\|\eta^\prime_{j}\|_2$ when $j=1$ and $I_{ij}=0$.

If $j=1$ and $I_{11}=1$, we obtain that
$$
\eta_{j}-u_{1}(r_{11}-u^{T}_{1}\bar{v}_{1})=
\eta^\prime_{j}-u_{1}(r^\prime_{11}-u^{T}_{1}\bar{v}_{1}),
$$
\[
\eta_{j}-\eta^\prime_{j}=u_{1}(r_{11}-r^\prime_{11}).
\]
Thanks to $\|u_{1}\|_2\leq 1$ and $|r_{11}-r^\prime_{11}|\leq\Delta$, we obtain that $\|\eta_{j}-\eta^\prime_{j}\|_2\leq\Delta$.
Therefore,
\begin{align}\label{eq:enddpmf}
  \frac{{\rm Pr}[V=\bar{v}_{j}|D]}{{\rm Pr}[V=\bar{v}_{j}|D^\prime]}\;=\;&
  \frac{\prod_{j\in [M]}P(\eta_{j})}{\prod_{j\in [M]}P(\eta^\prime_{j})} \nonumber \\
    \;=\;&e^{\frac{-\epsilon \sum^{M}_{j=1}\|\eta_{j}\|_2+\epsilon \sum^{M}_{j=1}\|\eta^\prime_{j}\|_2}{\Delta}} \nonumber \\
    \;=\;&e^{\frac{-\epsilon(\|\eta_{1}\|_2-\|\eta^\prime_{1}\|_2)}{\Delta}} \nonumber \\
    \;\leq\;&e^{\epsilon}.
\end{align}
The above inequality is derived from the triangle inequality.

In the general case, two arbitrary neighboring datasets differing at the record $r_{pq}$ instead of $r_{11}$. Just a few trivial replacements are enough to prove the assertion as required.
\end{proof}

\section{PDP-PMF scheme}\label{sec:pdpsch}

This section is to propose a probabilistic matrix factorization recommendation scheme with personalized differential privacy.
With traditional differential privacy, DP-PMF provides only a uniform level of privacy protection for all users.
Our PDP-PMF scheme further improves upon the DP-PMF scheme to satisfy different users' privacy requirements and protect user privacy at the item level.

To accomplish our PDP-PMF scheme, we modify the sample mechanism in \cite{JYC15} to get a sample mechanism with bounded differential privacy.
The system carries out all processes on matrices including collection of raw dataset, random sampling, iterations, adding noises and the final release. In contrast to previous literatures, we adopt the concept of bounded DP instead of unbounded DP. While in recent years a few papers \cite{LLW17,YZX17} on PDP have appeared mainly for user-based collaborative filtering, we introduce PDP into PMF for preserving item-based privacy.

The new sample mechanism consists of two independent steps
basically as follows:


\begin{enumerate}[Step 1:]
\item Sample randomly the original rating data independently with some probability, set the
unselected rating data to zero, and output the sampling rating matrix.
\item
Apply the traditional (bounded) $t$-DP mechanism to protecting users' rating data (output above) privacy.
\end{enumerate}
\begin{defi}[Sample Mechanism with Bounded DP]\label{def:smpdp}
Given a recommendation mechanism $F$, a dataset $D$ and a privacy specification $\mathcal{P}$. Let $RSM(D,\mathcal{P},t)$ represent
the procedure that samples independently each rating $r_{ij}$ in $D$ with
probability $\pi(r_{ij},t)$ defined as
\begin{equation} \label{eq:smprob}
\pi(r_{ij},t)=\begin{cases}
\frac{e^{\epsilon_{ij}}-1}{e^{t}-1}\ \ \ {\rm if}\  t> \epsilon_{ij},\\
\ \ 1 \ \ \ \ \ \ \, {\rm otherwise},
\end{cases}
\end{equation}
where  ${\rm min}(\epsilon_{ij})\leq t\leq {\rm max}(\epsilon_{ij})$ is a configurable
threshold.
The Sample mechanism is defined as
\begin{equation}\label{eq:smeqdef}
  S_{F}(D,\mathcal{P},t)=DP^{t}_{F}(RSM(D,\mathcal{P},t)),
\end{equation}
where $DP^{t}_{F}$ is an arbitrary recommendation model $F$ satisfying
$t$-differential privacy.
\end{defi}

The procedure $DP^{t}_{F}$ is essentially treated as a black box that
operates on a sampled dataset and produces (perturbed) item profile matrix $V$.
 It could be a procedure satisfying differential privacy by applying a Laplace mechanism,
an exponential mechanism, or even a composition of several differentially private mechanisms.
For our scheme, we make use of DP-PMF model including objective perturbation.
From \cite{JYC15,LLW17}, the threshold
parameter $t$ provides a possible means of
balancing various types of error and is usually chosen as: $t={\rm max}(\epsilon_{ij})$ or
$t=\frac{1}{|\mathcal{R}^\prime|}\sum_{(i,j)\in\mathcal{R}^\prime}\epsilon_{ij}$, where $\mathcal{R}^\prime$ represents the training set of user-item pairs with available ratings. We will explore the best threshold
parameter by experiments in Sect. \ref{sec:exper}.


Our personalized differentially private recommendation scheme consists of three
stages $(S_{1},S_{2},S_{3})$, as shown in Figure \ref{fig:ourscheme}.
\begin{figure}[ht]
\centering
\includegraphics[scale=0.8]{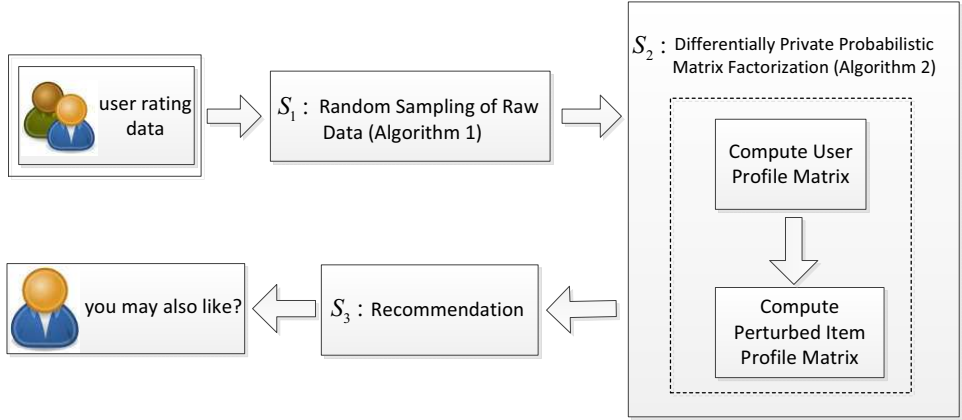}
\caption{PDP-PMF recommendation scheme}
\label{fig:ourscheme}
\end{figure}
\begin{enumerate}[(1)]
  \item \emph{$S_{1}$: Random sampling of raw data.}
We randomly sample the original dataset $D$ with a certain probability which is greatly dependent on each user-item based privacy demand $\epsilon_{ij}$ for each rating record $r_{ij}$'s sampling.
This stage outputs dataset matrix $D_{s}$,
see Algorithm \ref{alg:SampleRS}, where $RSM(D,\mathcal{P},t)$ indicates the randomized sampling algorithm.

\item \emph{$S_{2}$: (Traditional) differentially private probabilistic matrix factorization.}
Inspired by \cite{HXZ15}, we firstly run the PMF model \eqref{eq:earmin} on dataset $D_{s}$ output by stage $S_{1}$ to get the user profile matrix $U$ and store it secretly. Secondly, we operate
 the PMF model \eqref{eq:pertobj} with traditional $t$-DP to get the perturbed item profile matrix $\overline{V}$. Detailed steps are described in Algorithm \ref{alg:objobscure} for $DP^{t}_{F}(D_{s})$, where $I_{ij}$ is an indicator function on $D_s$ instead of $D$.

\item \emph{$S_{3}$: Recommendation.}
The system predicts users' ratings for items according to user profile matrix $U$ and item profile matrix $\overline{V}$
output by stage $S_{2}$ and provides recommendation services.
\end{enumerate}

\begin{algorithm}[tb] 
\caption{~~~$RSM(D,\mathcal{P},t)$} 
\label{alg:SampleRS}
\begin{algorithmic}[1] 
\REQUIRE
user set $\mathcal{U}$, item set $\mathcal{I}$, raw dataset (matrix) $D$,
privacy specification $\mathcal{P}$, sampling threshold $t$ 
\ENSURE  
sampled matrix $D_{s}$

\STATE Set $D_{s} \leftarrow D$
\FOR{each $i$ \ from\  $1$\  to\  $N$}
\FOR{each $j$ \ from\  $1$\  to\  $M$}
\IF{$D[i,j]\neq 0$}
\STATE Sample each rating pattern with probability \\ 
 $ \pi(D[i,j],t)=\begin{cases}
\frac{e^{\epsilon_{ij}}-1}{e^{t}-1}& \ \ \mbox{if} \ \ t> \epsilon_{ij}, \\
\ \ 1&\ \ \mbox{otherwise}.
\end{cases}$
\IF{$\pi(D[i,j],t)\neq1$}
\IF{$(\tau_{i},\ell_{j})$ is not selected}
\STATE $D_{s}[i,j]\leftarrow0$
\ENDIF
\ENDIF
\ENDIF
\ENDFOR
\ENDFOR
\RETURN $D_{s}$ 
\end{algorithmic}
\end{algorithm}

\begin{algorithm}[htbp]   
\caption{~~~$DP^{t}_{F}(D_{s})$}
\label{alg:objobscure}
\begin{algorithmic}[1]
\REQUIRE  
sampled profile set $D_{s}$, number of factors $d$, learning rate parameter $\gamma$,
regularization parameters $\lambda_{u}$ and $\lambda_{v}$, number of gradient
descent iterations $k_{1}$ and $k_{2}$, sampling threshold $t$
\ENSURE  
user profile matrix $U$, item profile matrix $\overline{V}$
\STATE Initialize random profile matrices $U$, $V$ and $\overline{V}$
\FOR{$\alpha$ \ from\  $1$\  to\  $k_{1}$}
\FOR{each $u_{i}$ and $v_j$}
\STATE
$
\nabla_{u_{i}}E\left(U(\alpha-1),V(\alpha-1)\right)\leftarrow$
\STATE
\quad \quad$-\sum_{j=1}^{M}I_{ij}
\left(D_s[i,j]-u^{T}_{i}(\alpha-1)v_{j}(\alpha-1)\right)v_{j}(\alpha-1)
+\lambda_{u}u_{i}(\alpha-1)
$
\STATE
$\nabla_{v_{j}}E(U(\alpha-1),V(\alpha-1))\leftarrow$
\STATE
\quad \quad$-\sum_{i=1}^{N}I_{ij}
\left(D_s[i,j]-u^{T}_{i}(\alpha-1)v_{j}(\alpha-1)\right)u_{i}(\alpha-1)+\lambda_{v}v_{j}(\alpha-1)$
\ENDFOR
\FOR{each $u_{i}$ and $v_j$}
\STATE $u_{i}(\alpha)\leftarrow u_{i}(\alpha-1)-\gamma\cdot\nabla_{u_{i}}E(U(\alpha-1),V(\alpha-1))$
\STATE $v_{j}(\alpha)\leftarrow v_{j}(\alpha-1)-\gamma\cdot\nabla_{v_{j}}E(U(\alpha-1),V(\alpha-1))$
\ENDFOR
\ENDFOR
\STATE Set $\overline{V}\leftarrow V$
\FOR{$\beta$ \ from\  $1$\  to\  $k_{2}$}
\FOR{$j$ \ from\  $1$\  to\  $M$}
\STATE Sample noise vector $\eta_{j}$ with probability density function $P(\eta_{j})\propto e^{\frac{-t\|\eta_{j}\|_2}{\Delta}}$
\STATE 
$\nabla_{\overline{v}_{j}}E(U,\overline{V}(\beta-1))\leftarrow
-\sum_{i=1}^{N}I_{ij}\left(D_s[i,j]-u^{T}_{i}\bar{v}_{j}(\beta-1)\right)u_{i}
+\lambda_{v}\bar{v}_{j}(\beta-1)+\eta_{j}$
\STATE $\bar{v}_{j}(\beta)\leftarrow \bar{v}_{j}(\beta-1)-\gamma\cdot\nabla_{\bar{v}_{j}}E(U,\overline{V}(\beta-1))$
\ENDFOR
\ENDFOR
\RETURN $U$ and $\overline{V}$ 
\end{algorithmic}
\end{algorithm}

\begin{thm}\label{thm:ourscheme}
The recommendation scheme proposed above satisfies $\mathcal{P}$-personalized differential privacy.
\qed
\end{thm}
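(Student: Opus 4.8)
The plan is to recognise the scheme of this section as precisely the Sample Mechanism with Bounded DP of Definition~\ref{def:smpdp} followed by the recommendation step $S_3$, and to prove $\mathcal{P}$-PDP by the same coupling argument used for the unbounded sample mechanism in \cite{JYC15}, but invoking the \emph{bounded} $t$-DP guarantee of the inner model. Indeed, stages $S_1$ and $S_2$ together produce the perturbed item matrix $\overline{V}=S_F(D,\mathcal{P},t)=DP^t_F(RSM(D,\mathcal{P},t))$ of \eqref{eq:smeqdef}, i.e. Algorithm~\ref{alg:SampleRS} composed with Algorithm~\ref{alg:objobscure}; the user matrix $U$ is kept confidential and only $\overline{V}$ is published (see the Setting), so stage $S_3$ acts merely as post-processing of $\overline{V}$ and cannot increase the privacy loss. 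Moreover, by Definition~\ref{def:smpdp} together with Theorem~\ref{thm:DPperobj} applied with $\epsilon=t$ — the noise vectors in Algorithm~\ref{alg:objobscure} have density $\propto e^{-t\|\eta_j\|_2/\Delta}$ — the inner mechanism $DP^t_F$ satisfies $t$-differential privacy. It therefore suffices to show that $S_F(\cdot,\mathcal{P},t)$ is $\mathcal{P}$-PDP.

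To that end, fix two neighbouring datasets $D,D'$ in the bounded matrix sense of Section~\ref{sec:prelim}: they agree in every entry except one user--item position $(a,b)$, where, without loss of generality, $D$ carries a valid rating $r=r_{ab}\neq 0$ and $D'$ carries $0$. Couple the two runs of $RSM$ by feeding identical sampling coins to all entries other than $(a,b)$. Since Algorithm~\ref{alg:SampleRS} samples only nonzero entries, $RSM(D')$ always leaves position $(a,b)$ equal to $0$, whereas $RSM(D)$ keeps $r$ there with probability $p:=\pi(r_{ab},t)$ from \eqref{eq:smprob} and zeroes it otherwise. Condition on that single coin: on the event that position $(a,b)$ is dropped the two sampled matrices are identically distributed, while on the complementary event the output of $RSM(D)$ equals a matrix $Z''$ that differs from $RSM(D')$ only at $(a,b)$, so $Z''$ and $RSM(D')$ are neighbouring for every outcome of the remaining coins. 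Writing $q:=\Pr[S_F(D',\mathcal{P},t)\in O]$ for an arbitrary output set $O$, and applying the $t$-DP property of $DP^t_F$ to the coupled pair $(Z'',RSM(D'))$ conditionally on the common coins and then averaging over them,
\begin{equation*}
\Pr[S_F(D,\mathcal{P},t)\in O]\;=\;p\,\Pr[DP^t_F(Z'')\in O]+(1-p)\,q\;\le\;\bigl(1+p(e^{t}-1)\bigr)\,q.
\end{equation*}
It then remains to verify, case by case on \eqref{eq:smprob}, that $1+p(e^{t}-1)\le e^{\epsilon_{ab}}$: if $t>\epsilon_{ab}$, the choice $p=(e^{\epsilon_{ab}}-1)/(e^{t}-1)$ makes it an equality, and if $t\le\epsilon_{ab}$ then $p=1$ and $1+p(e^{t}-1)=e^{t}\le e^{\epsilon_{ab}}$. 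This gives $\Pr[S_F(D,\mathcal{P},t)\in O]\le e^{\epsilon_{ab}}\Pr[S_F(D',\mathcal{P},t)\in O]$, which is exactly the bound required by Definition~\ref{def:perdp} for the pair $(a,b)$ at which $D$ and $D'$ differ. The reverse inequality follows from the same decomposition after lower-bounding $\Pr[DP^t_F(Z'')\in O]\ge e^{-t}q$ by the other direction of $t$-DP, which yields $\Pr[S_F(D,\mathcal{P},t)\in O]\ge\bigl(1-p(1-e^{-t})\bigr)q$ with $1-p(1-e^{-t})\ge e^{-\epsilon_{ab}}$ in both branches of \eqref{eq:smprob}. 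Combining the two bounds shows $S_F(\cdot,\mathcal{P},t)$ --- and hence, after post-processing by $S_3$, the whole scheme --- satisfies $\mathcal{P}$-PDP.

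The arithmetic around \eqref{eq:smprob} and the post-processing remark are routine. The step that needs genuine care is the coupling/conditioning argument: one must argue that the black-box $t$-DP property of $DP^t_F$ transfers \emph{pointwise} over the randomness of sampling the remaining entries --- i.e. it applies to $(Z'',RSM(D'))$ for each fixing of those coins, so that it survives the averaging --- and one must keep to the bounded-DP convention throughout, namely that the differing entry moves between a nonzero value and $0$ and never between two nonzero values, since this is precisely what matches the hypotheses of Theorem~\ref{thm:DPperobj} and lets the reduction close with the per-pair parameter $\epsilon_{ab}$ rather than a global one.
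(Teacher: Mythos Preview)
Your proposal is correct and follows essentially the same route as the paper's proof: decompose $\Pr[S_F(D,\mathcal{P},t)\in O]$ according to whether the single differing entry is retained by $RSM$, apply the $t$-DP guarantee of $DP^t_F$ on the event it is retained, and then verify $1+\pi(e^t-1)\le e^{\epsilon_{ab}}$ and $1-\pi(1-e^{-t})\ge e^{-\epsilon_{ab}}$ by the two cases of \eqref{eq:smprob}. The only cosmetic difference is that you phrase the decomposition via a coupling of the sampling coins, whereas the paper writes it as an explicit sum $\sum_{Z\preceq D'}$ over the possible sampled matrices; these are the same argument.
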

\begin{proof}
Let $D=[r_{ij}]_{N\times M}$ and $D^\prime=[r^\prime_{ij}]_{N\times M}$ denote two (neighboring)  sets of ratings
different only at one record, $r_{pq}\neq 0$ and $r^\prime_{pq}=0$.
This means that the dateset $D^\prime$ results from removing from $D$ the rating $r_{pq}$.
We will establish that
for any output $O\in {\rm Range}(S_{F}(D,\mathcal{P},t))$,
\begin{equation}\label{eq:pdppmf}
e^{-\epsilon_{pq}}\cdot {\rm Pr}[S_{F}(D^\prime,\mathcal{P},t)\in O] \leq {\rm Pr}[S_{F}(D,\mathcal{P},t)\in O]\leq e^{\epsilon_{pq}}\cdot {\rm Pr}[S_{F}(D^\prime,\mathcal{P},t)\in O].
\end{equation}

We first prove the right hand of \eqref{eq:pdppmf}.
Note that all possible outputs of $RSM(D,\mathcal{P},t)$, see Algorithm \ref{alg:SampleRS}, can be divided into two parts.
One case is that the score $r_{pq}$ is selected and the other case unselected. Still, we denote the output set by a matrix
$Z=[r^{\prime\prime}_{ij}]_{N\times M}$. For convenience, the notation $Z\preceq D^\prime$ means that for any $1\le i\le N$ and $1\le j\le M$,\ the relation, $ r^{\prime\prime}_{ij}=r^\prime_{ij}$ or $r^{\prime\prime}_{ij}=0$, holds. We denote by $Z_{+r_{pq}}$ the dataset matrix differing from $Z$ at only one (additional) record $r_{pq}\neq  0$.
Then we rewrite \eqref{eq:pdppmf} as follows,
\begin{align}\label{eq:schemepro1}
  {\rm Pr}[S_{F}(D,\mathcal{P},t)\in O] \,&=\, \sum_{Z\preceq D^\prime}\Big(\pi(r_{pq},t)\cdot{\rm Pr}[RSM(D^\prime,\mathcal{P},t)=Z]\cdot{\rm Pr}[DP^{t}_{F}(Z_{+r_{pq}})\in O]\Big) \nonumber \\
&\ \ \ \,+\, \sum_{Z\preceq D^\prime}\Big((1-\pi(r_{pq},t))\cdot{\rm Pr}[RSM(D^\prime,\mathcal{P},t)=Z]\cdot{\rm Pr}[DP^{t}_{F}(Z)\in O]\Big)  \nonumber \\
&\,=\, \sum_{Z\preceq D^\prime}\Big(\pi(r_{pq},t)\cdot{\rm Pr}[RSM(D^\prime,\mathcal{P},t)=Z]\cdot{\rm Pr}[DP^{t}_{F}(Z_{+r_{pq}})\in O]\Big)    \nonumber \\
&\ \ \ \,+\, (1-\pi(r_{pq},t))\cdot{\rm Pr}[S_{F}(D^\prime,\mathcal{P},t)\in O].
\end{align}
From  \eqref{eq:smeqdef}, $DP^{t}_{F}$ satisfies $t$-differential privacy. Then we have, further,

\begin{align}\label{eq:schemepro2}
  {\rm Pr}[S_{F}(D,\mathcal{P},t)\in O] \,&\leq\, \sum_{Z\preceq D^\prime}\Big(\pi(r_{pq},t)\cdot{\rm Pr}[RSM(D^\prime,\mathcal{P},t)=Z]\cdot e^{t}\cdot{\rm Pr}[DP^{t}_{F}(Z)\in O]\Big) \nonumber \\
&\ \ \ \,+\, (1-\pi(r_{pq},t))\cdot{\rm Pr}[S_{F}(D^\prime,\mathcal{P},t)\in O]\nonumber \\
&\,=\, e^{t}\cdot\pi(r_{pq},t)\cdot{\rm Pr}[S_{F}(D^\prime,\mathcal{P},t)\in O] \,+\, \left(1-\pi(r_{pq},t)\right)\cdot{\rm Pr}[S_{F}(D^\prime,\mathcal{P},t)\in O] \nonumber \\
&\,=\, \left(1-\pi(r_{pq},t)+e^{t}\cdot\pi(r_{pq},t)\right)\cdot{\rm Pr}[S_{F}(D^\prime,\mathcal{P},t)\in O].
\end{align}
Now in view of \eqref{eq:smprob} we consider two cases for $\pi(r_{pq},t)$ as below.

\begin{enumerate}[(a)]
  \item The case $t\leq\epsilon_{pq}$. The rating record $r_{pq}$ is selected with probability $\pi(r_{pq},t)=1$. This implies
\[\begin{split}
  {\rm Pr}[S_{F}(D,\mathcal{P},t)\in O] \,&\leq\, \left(1-\pi(r_{pq},t)+e^{t}\cdot\pi(r_{pq},t)\right)\cdot{\rm Pr}[S_{F}(D^\prime,\mathcal{P},t)\in O] \\
&\,=\, e^{t}\cdot{\rm Pr}[S_{F}(D^\prime,\mathcal{P},t)\in O] \\
&\,\leq\, e^{\epsilon_{pq}}\cdot{\rm Pr}[S_{F}(D^\prime,\mathcal{P},t)\in O].
\end{split}\]
  \item The case $t>\epsilon_{pq}$. The rating record $r_{pq}$ is selected with probability $\pi(r_{pq},t)=\frac{e^{\epsilon_{pq}}-1}{e^{t}-1}$. Then we obtain
\[
1-\pi(r_{pq},t)+e^{t}\cdot\pi(r_{pq},t)=1-\frac{e^{\epsilon_{pq}}-1}{e^{t}-1}+e^{t}\cdot\frac{e^{\epsilon_{pq}}-1}{e^{t}-1}
=e^{\epsilon_{pq}},
\]
as required. This finishes our proof of the right inequality of \eqref{eq:pdppmf}.
\end{enumerate}

The proof of the left inequality of \eqref{eq:pdppmf} can follow the same line as above with a few modifications. Firstly, using
\eqref{eq:schemepro1} and $t$-differential privacy from $DP^{t}_{F}$, we observe that
\begin{equation}\label{eq:schemepro3}
  {\rm Pr}[S_{F}(D,\mathcal{P},t)\in O] \,\ge\,
 \left(1-\pi(r_{pq},t)+e^{-t}\cdot\pi(r_{pq},t)\right)\cdot{\rm Pr}[S_{F}(D^\prime,\mathcal{P},t)\in O].
\end{equation}

Secondly, while the case of $t\leq\epsilon_{pq}$ is trivial, we mention for the case of $t>\epsilon_{pq}$ that
\[
\begin{split}
  1-\pi(r_{pq},t)+e^{-t}\cdot\pi(r_{pq},t) \,&=\, 1+\frac{e^{\epsilon_{pq}}-1}{e^{t}-1}\cdot\left(e^{-t}-1\right)
=1-e^{-t} \cdot \left(e^{\epsilon_{pq}}-1 \right) \\
&\,>\,1-e^{-\epsilon_{pq}} \cdot \left(e^{\epsilon_{pq}}-1 \right)=e^{-\epsilon_{pq}}.
\end{split}
\]

Finally, the proof is complete as required.
\end{proof}

\begin{rem}
Theorem \ref{thm:ourscheme} is our main theoretic assertion. Indeed, we would like also to consider the general choice of modification (between nonzero ratings) for defining neighboring matrices \cite{HR12}. That is, two neighboring matrices (sets) of ratings
differ only at one (non-zero) record $(r_{pq} \neq r^\prime_{pq})$, which allows $r_{pq}, r^\prime_{pq}\neq 0$.
In this general \emph{bounded DP} case, Theorem \ref{thm:DPperobj} holds still with the sensitivity $\Delta=r_{\rm max}-r_{\rm min}$ instead of $r_{\rm max}$. As for Theorem \ref{thm:ourscheme}, the recommendation scheme proposed preserves obviously $\mathcal{P}^\prime$-personalized differential privacy where $\mathcal{P}^\prime:=[\epsilon_{ij}^\prime]_{N\times M}$ and $\epsilon_{ij}^\prime = 2 \epsilon_{ij}$. It is a pity that we do not achieve $\epsilon_{ij}^\prime =  \epsilon_{ij}$ theoretically.

\end{rem}

\section{Experimental evaluation}\label{sec:exper}

In this section, we mainly present an experimental evaluation of our PDP-PMF scheme.
In particular, we compare the recommendation quality of PDP-PMF with that of DP-PMF.

\subsection{Experimental Setup}
\textbf{Experimental comparison.}
Probabilistic matrix factorization technique has been shown to provide relatively high
predictive accuracy  \cite{SM08}. In order to verify the recommendation quality of the proposed PDP-PMF scheme, we carry out
a series of experiments to compare it with the DP-PMF scheme.
From \cite{Kor10, MYL08}, it is known that the recommended accuracy of PMF is higher than that of
user-based collaborative filtering method. 
For this reason, we do not make comparisons with the differentially private recommendation schemes based on collaborative filtering method from \cite{LLW17,YZX17}.

\textbf{Selection of Datasets.} We evaluate the schemes with three public datasets: MovieLens 100K dataset (ML-100K)\footnote{https://grouplens.org/datasets/movielens/\label{web:dataset1}},
MovieLens 1M dataset (ML-1M)\textsuperscript{\ref{web:dataset1}} and Netflix dataset\footnote{https://www.netflixprize.com/}.
To be specific,
the MovieLens $100$K dataset contains $100$ thousand ratings of $1,682$ movies from $943$ users.
The MovieLens $1$M dataset contains $6,000$ users' ratings of $4,000$ movies and a total of $1$ million records.
The third dataset is reduced randomly from
the famous Netflix dataset and consists of $1.5$ million ratings given by
$7,128$ users over 16,907 movies. For the above datasets,
each rating value is an integer in $[1, 5]$.

\textbf{Evaluation Metrics.}
We divide the available rating data into the training and testing
sets and use tenfold cross validation to train and evaluate the recommendation system.
In our experiments, we focus on the prediction accuracy of the recommendations.
The root mean square error (RMSE) is defined by
${\rm RMSE}=\sqrt{\sum_{(i,j)\in\mathcal{R}}(\hat{r}_{i,j}-r_{ij})^{2}/{|\mathcal{R}|}}$,
where $\hat{r}_{ij}=u^{T}_{i}\bar{v}_{j}$ is the predicted user $\tau_{i}$'s rating of movie $\ell_{j}$,
and $r_{ij}$ the true rating,
$\mathcal{R}$ represents the testing set of ratings being predicted, and $|\mathcal{R}|$ denotes the number of test ratings.
The cumulative distribution function (CDF) computes the percentage of ratings whose
prediction errors, i.e.,
$\{|\hat{r}_{ij}-r_{ij}|:\,(i,j)\in\mathcal{R}\}$, are less than or equal to the variable ranging in $[0, r_{\rm max}]$.


\textbf{Parameter Selection.} In the PDP-PMF and DP-PMF schemes, some specific parameters are fixed as follows:
learning rate $\gamma=50$, regularization parameters $\lambda_{u}=\lambda_{v}=0.01$,
The maximum number of iterations $k_1=50$ and dimension of all the profile matrix $d=20$.
In our proposed PDP-PMF scheme, there are two more important parameters, privacy specification
$\mathcal{P}$ and
sampling threshold $t$.

In fact, some studies in literatures \cite{AG05,BGS05} related to psychology have found that users can be divided into different
groups based on levels of privacy needs.
In the experiment we randomly divided all records for movie ratings
into three levels: \emph{conservative} indicates that the records have high
privacy requirements among $[\epsilon_{c},\epsilon_{m})$; \emph{moderate} means that the records'
privacy concern is at a middle level ranging in $[\epsilon_{m},\epsilon_{l})$; \emph{liberal} represents
the records with low privacy concern $\epsilon_{l}$. The percentages of the ratings for
\emph{conservative} and \emph{moderate} security levels are denoted by $f_{c}$ and $f_{m}$, respectively.
Therefore, the proportion of the low-level privacy requirement level \emph{liberal} is $f_{l}=1-f_{c}-f_{m}$.

Based on findings reported in \cite{AG05} in the context of a
user survey regarding privacy attitudes, the default fractions of user-item records in the  \emph{conservative}
and \emph{moderate} groups in our experiment are set to $f_{c}=0.54$ and $f_{m}=0.37$, respectively.
 In addition, the privacy preferences
at the \emph{conservative} and \emph{moderate} levels are picked uniformly at random from the ranges
$[\epsilon_{c},\epsilon_{m})$ and $[\epsilon_{m},\epsilon_{l})$, respectively. As usual, we specify
the privacy preference of the low-level privacy requirement \emph{liberal} $\epsilon_{l}=1.0$, the default values of
$\epsilon_{c}$, $\epsilon_{m}$ are $\epsilon_{c}=0.1$ and $\epsilon_{m}=0.2$, respectively, 
and the sampling threshold
$t=\frac{1}{|\mathcal{R}^\prime|}\sum_{(i,j)\in\mathcal{R}^\prime}\epsilon_{ij}$
(i.e., the average privacy setting).
To meet the privacy requirements of all users, the privacy parameter of the DP-PMF scheme is set to the minimum
privacy preference, i.e., $\epsilon=\epsilon_{c}=0.1$, where a smaller value
implies greater privacy.

Tables \ref{Tab:paraSetf2}-\ref{Tab:paraSetf5} list the various parameters used
in our experiments.
For valuable comparisons, we mainly consider the ways of changing the privacy specification ($f_c$ or $\epsilon_m$) and the sampling threshold, respectively, as follows.
\begin{table}[htbp]
    \begin{spacing}{1.5} 
    \caption{Parameter sets for Fig. \ref{fig:subfigfc}}\label{Tab:paraSetf2}
    \end{spacing}
    \centering
    \begin{tabular}{|l|c|r|} 
        \hline
            Parameter set& Parameter value\\  
        \hline
            \multirow{3}*{PDP-PMF}& \multirow{3}*{ \makecell[tl]{$f_{m}=0.37$, $f_{l}=1-f_{c}-f_{m}$, $\epsilon_{c}=0.1$,\\ $\epsilon_{m}=0.2$,
            $\epsilon_{l}=1.0$, $t=\frac{1}{|\mathcal{R}^\prime|}\sum_{(i,j)\in\mathcal{R}^\prime}\epsilon_{ij}$} }\\
            ~& ~\\
            ~& ~\\
        \hline
            DP-PMF& $\epsilon=0.1$\\
        \hline
    \end{tabular}
\end{table}
\begin{figure}[htbp]  
  \centering
  \subfigure[ML-100K]{
    \label{fig:subfigfc100k:a} 
    \includegraphics[width=2.77in]{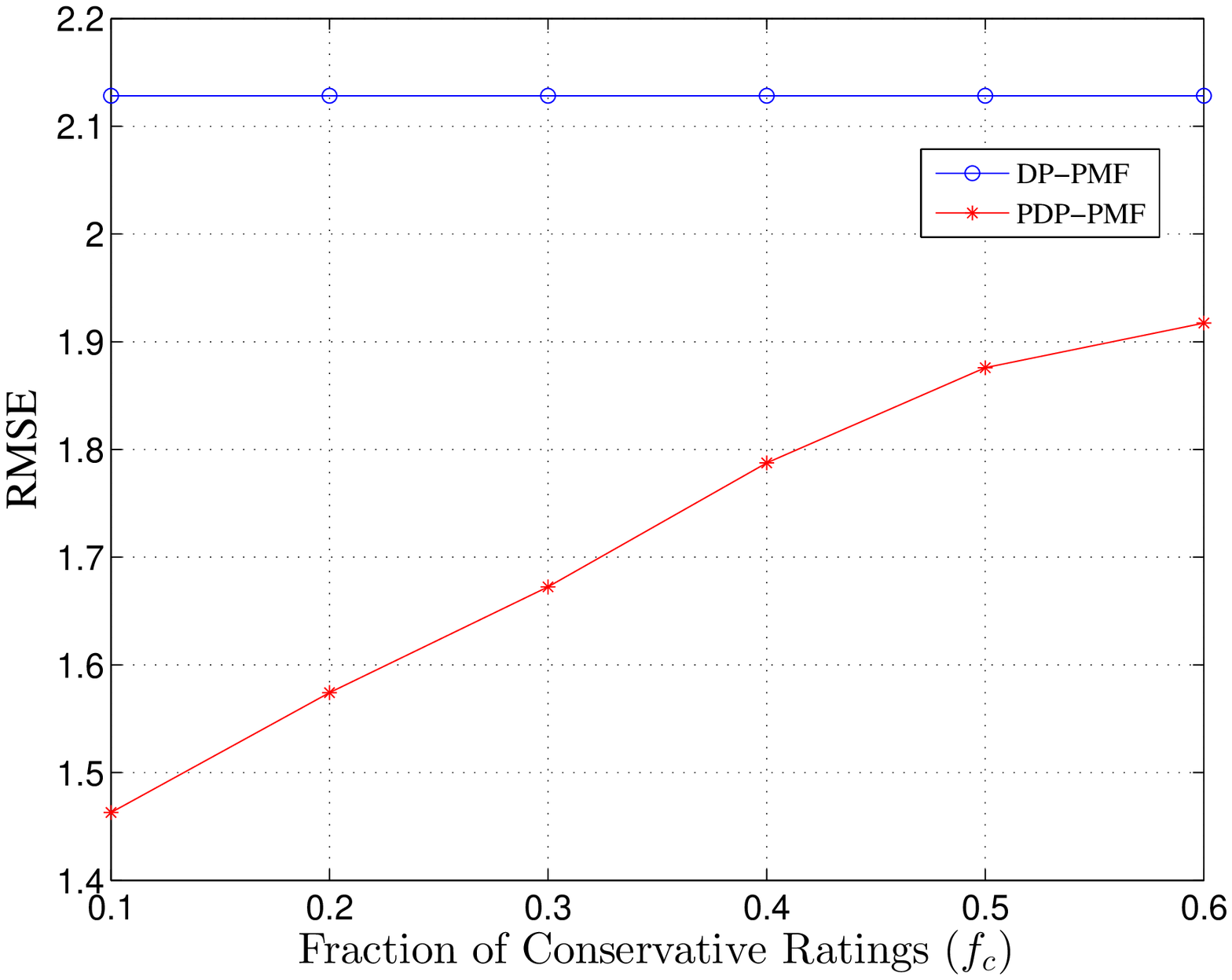}
  }
  \subfigure[ML-1M]{
    \label{fig:subfigfc1m:b} 
    \includegraphics[width=2.77in]{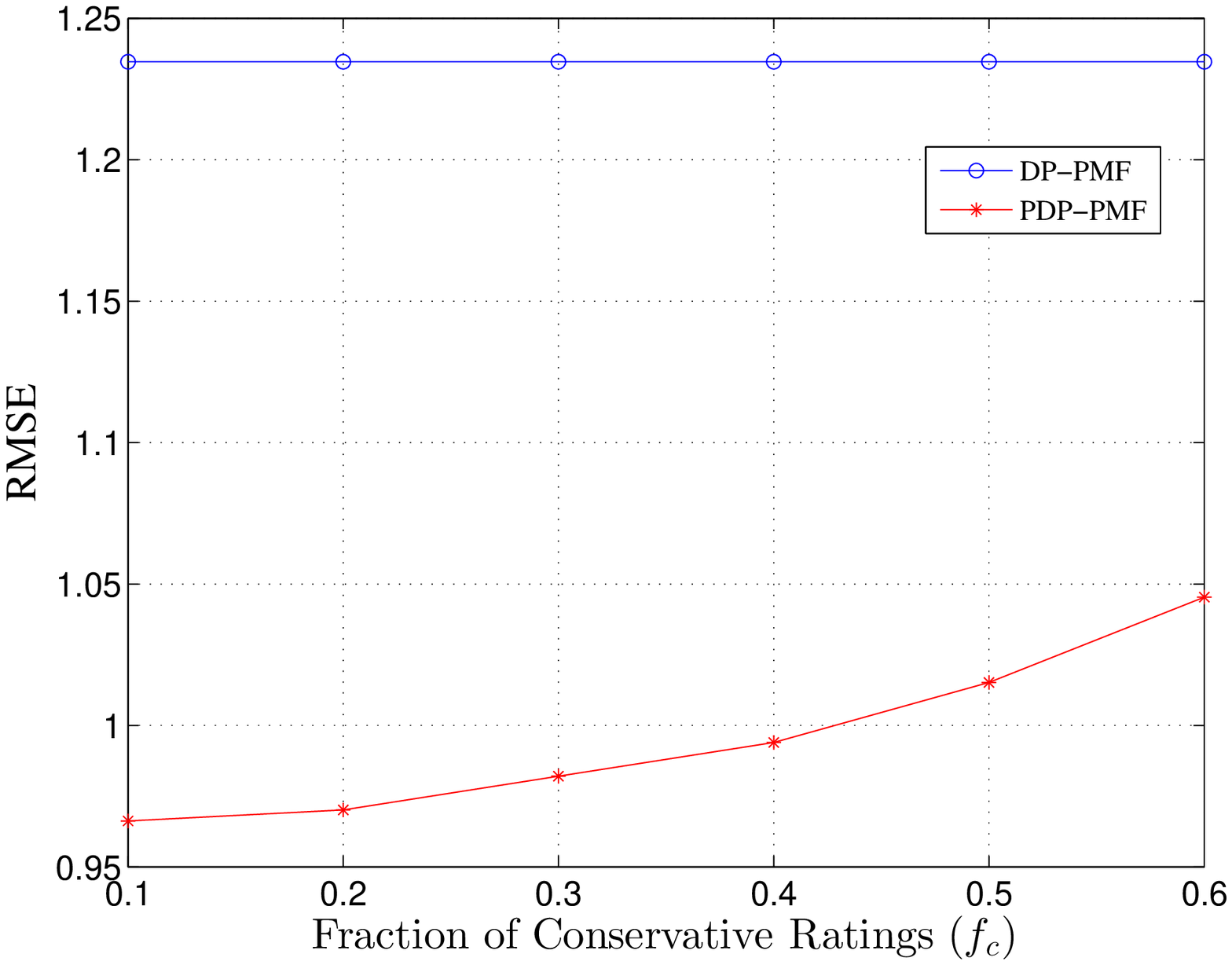}
  }
  \subfigure[Netflix]{
    \label{fig:subfigfcnet:c} 
    \includegraphics[width=2.77in]{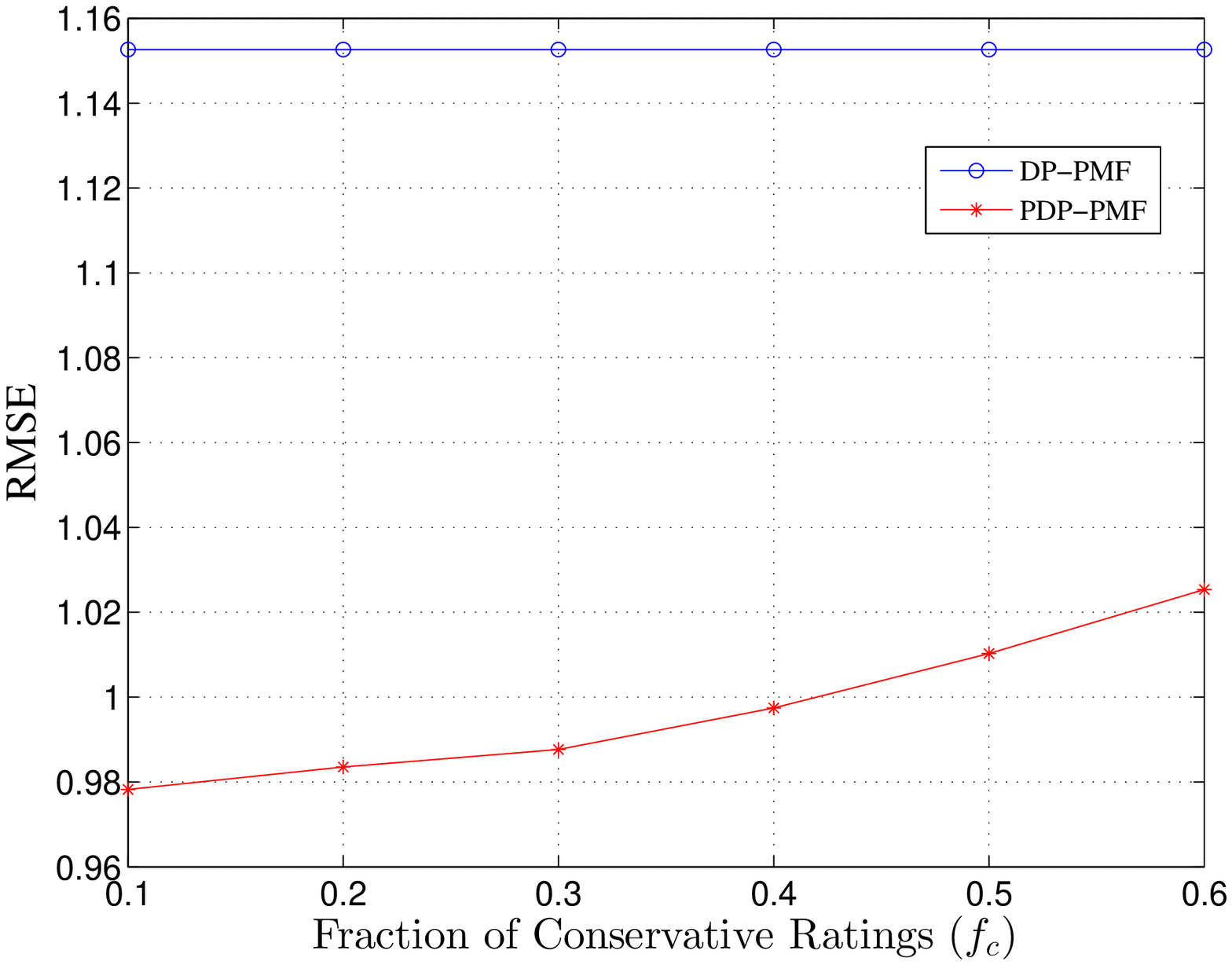}
  }
  \caption{RMSE of each scheme for prediction task, as $f_{c}$ is varied.
  }
  \label{fig:subfigfc} 
\end{figure}
\begin{table}[htbp]
    \begin{spacing}{1.5} 
    \caption{Parameter sets for Fig. \ref{fig:subfigcdf}}\label{Tab:paraSetf4}
    \end{spacing}
    \centering
    \begin{tabular}{|l|c|r|} 
        \hline
            Parameter set& \multicolumn{2}{c|}{Parameter value}\\  
        \hline
            PDP-PMF1& $f_{c}=0.54$, $f_{l}=0.09$& \multirow{3}*{ \makecell[tl]{$f_{m}=0.37$, $\epsilon_{c}=0.1$, $\epsilon_{m}=0.2$, \\
             $\epsilon_{l}=1.0$, $t=\frac{1}{|\mathcal{R}^\prime|}\sum_{(i,j)\in\mathcal{R}^\prime}\epsilon_{ij}$} } \\
        \cline{1-2}
            PDP-PMF2& $f_{c}=0.37$, $f_{l}=0.26$& ~\\
        \cline{1-2}
            PDP-PMF3& $f_{c}=0.20$, $f_{l}=0.43$& ~\\
        \hline
            DP-PMF& \multicolumn{2}{c|}{$\epsilon=0.1$}\\
        \hline
    \end{tabular}

\end{table}
\begin{figure}[htbp]  
  \centering
  \subfigure[ML-100K]{
    \label{fig:subfigcdf100k:a} 
    \includegraphics[width=2.73in]{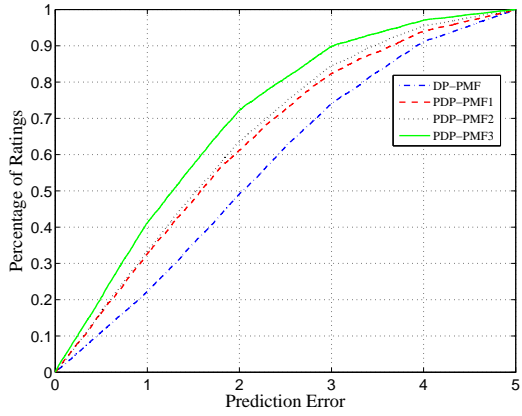}
  }
  \subfigure[ML-1M]{
    \label{fig:subfigcdf1m:b} 
    \includegraphics[width=2.73in]{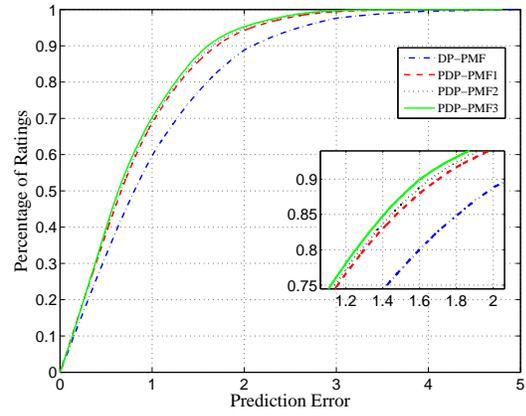}
  }
  \subfigure[Netflix]{
    \label{fig:subfigcdfnet:c} 
    \includegraphics[width=2.73in]{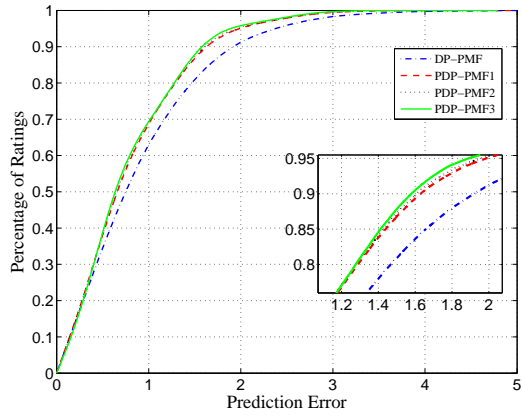}
  }
  \caption{CDF of prediction errors in each scheme.
  }
  \label{fig:subfigcdf} 
\end{figure}
\begin{table}[htbp]
    \begin{spacing}{1.5} 
    \caption{Parameter sets for Fig. \ref{fig:subfigem}} \label{Tab:paraSetf3}
    \end{spacing}
    \centering
    \begin{tabular}{|l|c|r|} 
        \hline
            Parameter set& \multicolumn{2}{c|}{Parameter value}\\  
        \hline
            PDP-PMF1& $f_{c}=0.54$, $f_{l}=0.09$& \multirow{3}*{ \makecell[tl]{$f_{m}=0.37$, $\epsilon_{c}=0.1$, $\epsilon_{l}=1.0$, \\
           $t=\frac{1}{|\mathcal{R}^\prime|}\sum_{(i,j)\in\mathcal{R}^\prime}\epsilon_{ij}$} } \\
        \cline{1-2}
            PDP-PMF2& $f_{c}=0.37$, $f_{l}=0.26$& ~\\
        \cline{1-2}
            PDP-PMF3& $f_{c}=0.20$, $f_{l}=0.43$& ~\\
        \hline
            DP-PMF& \multicolumn{2}{c|}{$\epsilon=0.1$}\\
        \hline
    \end{tabular}

\end{table}
\begin{figure}[htbp]  
  \centering
  \subfigure[ML-100K]{
    \label{fig:subfigem100k:a} 
    \includegraphics[width=2.77in]{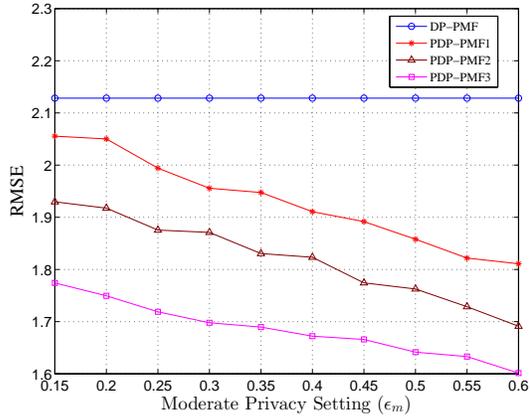}
  }
  \subfigure[ML-1M]{
    \label{fig:subfigem1m:b} 
    \includegraphics[width=2.77in]{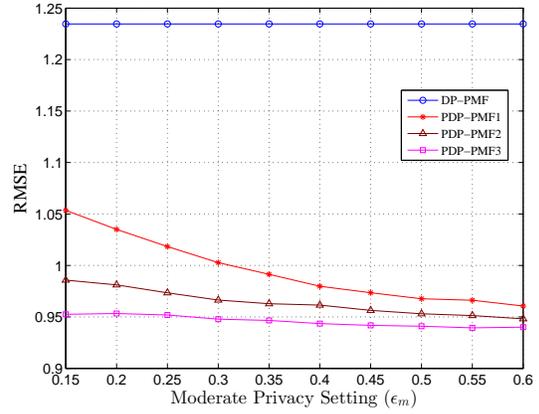}
  }
  \subfigure[Netflix]{
    \label{fig:subfigemnet:c} 
    \includegraphics[width=2.77in]{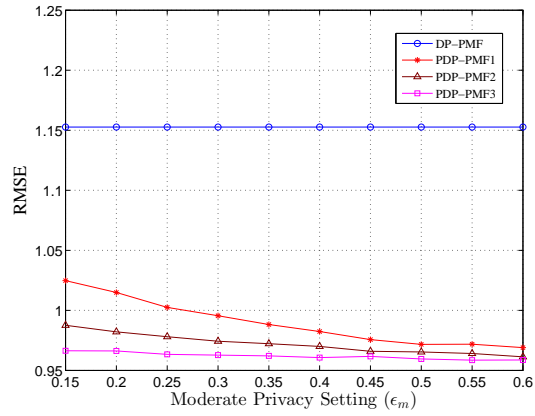}
  }
  \caption{RMSE of each scheme for prediction task, as $\epsilon_{m}$ is varied.
  }
  \label{fig:subfigem} 
\end{figure}
\begin{table}[htbp]
    \begin{spacing}{1.5} 
    \caption{Parameter sets for Fig. \ref{fig:subfig-t}}\label{Tab:paraSetf5}
    \end{spacing}
    \centering
    \begin{tabular}{|l|c|r|} 
        \hline
            Parameter set& \multicolumn{2}{c|}{Parameter value}\\  
        \hline
            PP-t1& $t=0.6$& \multirow{3}*{ \makecell[tl]{\vspace{-2mm}\\ $f_{c}=0.60$,\\ $f_{m}=0.35$, \\ $f_{l}=0.05$,\\ $\epsilon_{c}=0.1$,\\ $\epsilon_{m}=0.4$,\\           $\epsilon_{l}=1.0$ }
            } \\
        \cline{1-2}
            PP-t2& $t=0.7$& ~\\
        \cline{1-2}
            PP-t3& $t=0.8$& ~\\
        \cline{1-2}
            PP-t4& $t=1.0$& ~\\
            \cline{1-2}
            PET1& ${\rm PET}1(t)={\rm CDF}(1)$\ {\rm with}\ $t$& ~\\
        \cline{1-2}
            PET1h&{\rm PET1h}$(t)={\rm CDF}(1.5)$ {\rm with}\ $t$& ~\\
        \cline{1-2}
            PET2& ${\rm PET}2(t)={\rm CDF}(2)$\ {\rm with}\ $t$& ~\\
                \hline
    \end{tabular}

\end{table}
\begin{figure}[htbp]  
  \centering
  \subfigure[CDF of prediction errors in PDP-PMF]{
    \label{fig:subfigcdf100k:a} 
    \includegraphics[width=2.73in]{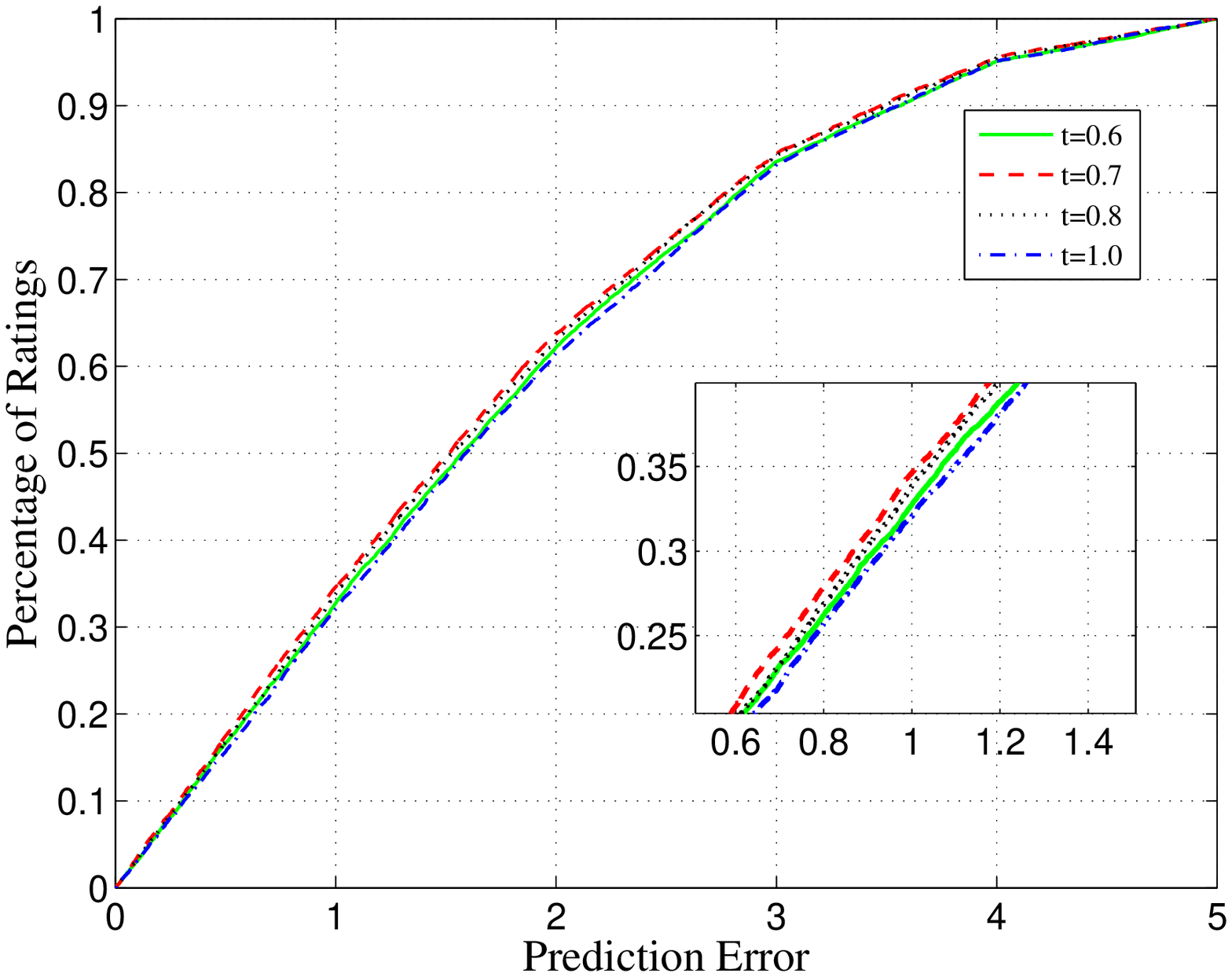}
  }
  \subfigure[CDF at three error points, $1.0, 1.5, 2.0$]{
    \label{fig:subfigcdf100k:b} 
    \includegraphics[width=2.73in]{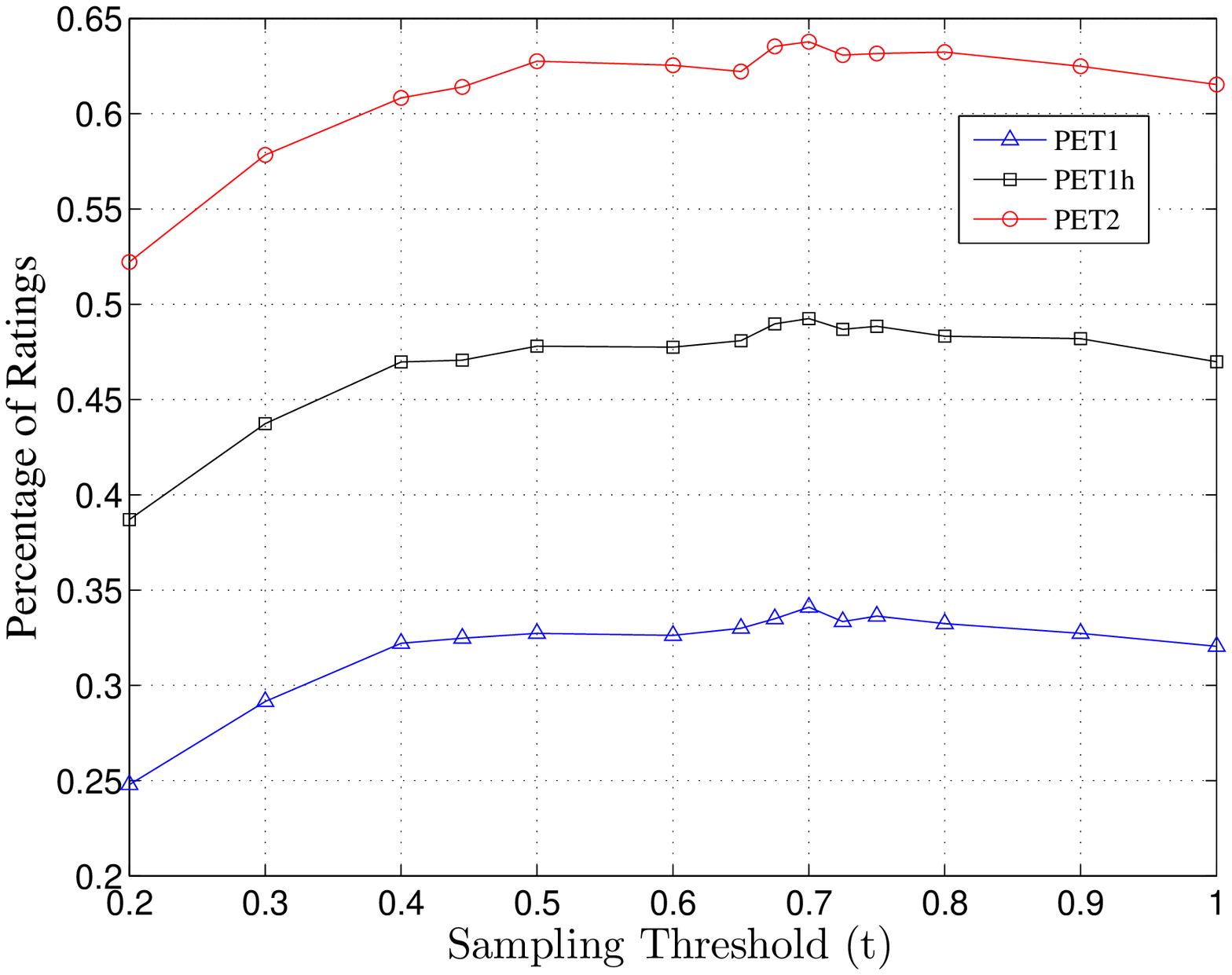}
  }

  \caption{CDF of prediction errors for ML-100K in PDP-PMF, as $t$ is varied.
  }\label{fig:subfig-t} 
\end{figure}

\subsection{Results}
{\bf Varying the Privacy Specification.}
We perform this firstly by varying $f_c$ (from $0.1$ to $0.6$), while
keeping the other parameters at their defaults. For
each setting of $f_c$ with $f_{m}$ fixed, the fraction of liberal users is equal to $1-f_{c}-f_{m}$, then decreasing $f_c$ increases the number of liberal
users, see Tables \ref{Tab:paraSetf2} and \ref{Tab:paraSetf4}.
We expect that the PDP-PMF
performs even better when the liberal fraction $f_l$ is greater,
cf. Figs. \ref{fig:subfigfc} and \ref{fig:subfigcdf}.

Figure \ref{fig:subfigfc} describes RMSEs of two schemes for prediction task with varied $f_{c}$. In general,
RMSE of PDF-PMF scheme increases approximately linearly with the increasing of fraction of conservative records $f_c$ and is much smaller than that of DP-PMF scheme (staying unchanged).  For
Figs. \ref{fig:subfigfc1m:b} and \ref{fig:subfigfcnet:c}, the RMSEs are always around $1.0$.

Figure \ref{fig:subfigcdf} illustrates CDF of prediction errors for four schemes with different $f_{c}$. We observe that higher $f_{c}$ makes the average privacy preference $t$ smaller, results in more noises and gives worse prediction.
For Figs. \ref{fig:subfigcdf1m:b} and \ref{fig:subfigcdfnet:c}, the curves for the three PDP-PMF schemes
extremely overlap each other and their percentages of ratings with prediction errors less than $1$ achieve $70\%$.

Next, we vary $\epsilon_{m}$, which controls the upper (lower) bound on
the range of privacy settings generated for the conservative (moderate)
user-item records, see Table \ref{Tab:paraSetf3}.
We expect the error for the PDP-PMF approaches
to be smaller with a higher $\epsilon_{m}$,
cf. Fig. \ref{fig:subfigem}.

Figure \ref{fig:subfigem} demonstrates RMSEs of four schemes for prediction task with varied $\epsilon_{m}$. Similarly, RMSEs of PDF-PMF schemes decreases approximately linearly with the increasing of moderate privacy setting $\epsilon_{m}$ and is much smaller than that of DP-PMF scheme (unchanged). For
Figs. \ref{fig:subfigem1m:b} and \ref{fig:subfigemnet:c}, the RMSEs
even slightly deceases compared with Fig. \ref{fig:subfigem100k:a},
 and they approach some values in trend, respectively, among $[0.94,0.97]$, and in particular, the RMSEs of PDP-PMF3 are almost flat.

{\bf Searching for the best sampling threshold.} We vary the sampling threshold $t$, which balances various types of error, see Table \ref{Tab:paraSetf5}.  We would like to find the best sampling threshold and check whether the simply setting $t = \epsilon_l= 1.0$ offers good results in our setting, cf. Fig. \ref{fig:subfig-t}.



Figure \ref{fig:subfig-t} shows that the sampling threshold has  a slight influence on the recommendation quality.
Figure \ref{fig:subfigcdf100k:b} helps us to find the best choice of sampling threshold $t\approx 0.70$ for the given setting, and it demonstrates that
when $t>0.4$ (including the theoretically average privacy preference $0.445$), the variations on the percentages of ratings are quite small. In particular, the difference between the percentages of ratings at the approximately maximum point $0.7$ and the liberal point $1.0$ is around $2.2\%$ for each of the three curves.

Finally, from the above experiments, we observe that our proposed PDP-PMF scheme outperforms the DP-PMF scheme on the three datasets. Further, due to the larger scale of ratings, the latter two datasets would prevent more noise interference and are much helpful to obtain stable results.

\section{Conclusions}

In this paper, we focus on users' individual privacy preferences for all items and build a personalized differential privacy recommendation scheme, PDP-PMF. The scheme is based on the probabilistic matrix factorization model, uses a modified sample mechanism with bounded DP and meets users' privacy requirements specified at the item level. Naturally, the PDP-PMF scheme would be very suitable for our real life.
To the best of our knowledge, we are the first to apply personalized differential privacy to the task of probabilistic matrix factorization.

Moreover, we propose a traditional differential privacy recommendation scheme DP-PMF that provides all users with a uniform level of privacy protection.
We confirm in theory that both schemes satisfy (personalized) differential privacy as desired.
Through our experiments, the PDP-PMF scheme improves greatly the recommendation quality compared with (traditional) DP-PMF.



\section*{Acknowledgements}
S. Zhang is partially supported by the National Natural Science Foundation of China (No. 11301002)
and Natural Science Foundation for the Higher Education Institutions of Anhui Province of China (KJ2018A0017).
Z. Chen is partially supported by the National Natural Science Foundation of China (No. 61572031). H. Zhong is partially supported by the National Natural Science Foundation of China (No. 61572001).



\begin{thebibliography}{99}

\setlength{\parsep }{-0.5ex}
\setlength{\itemsep}{-0.5ex}

\frenchspacing

\newcommand\BAMS{\emph{Bull. Amer. Math. Soc.\ }}
\newcommand\BIT{\emph{BIT\ }}
\newcommand\Com{\emph{Computing\ }}
\newcommand\CA{\emph{Constr. Approx.\ }}
\newcommand\FCM{\emph{Found. Comput. Math.\ }}
\newcommand\JAT{\emph{J. Approx. Th.\ }}
\newcommand\JC{\emph{J. Complexity\ }}
\newcommand\JMA{\emph{SIAM J. Math. Anal.\ }}
\newcommand\JMAA{\emph{J. Math. Anal. Appl.\ }}
\newcommand\JMM{\emph{J. Math. Mech.\ }}
\newcommand\MC{\emph{Math. Comp.\ }}
\newcommand\NM{\emph{Numer. Math.\ }}
\newcommand\RMJ{\emph{Rocky Mt. J. Math.\ }}
\newcommand\SJNA{\emph{SIAM J. Numer. Anal.\ }}
\newcommand\SR{\emph{SIAM Rev.\ }}
\newcommand\TAMS{\emph{Trans. Amer. Math. Soc.\ }}
\newcommand\TOMS{\emph{ACM Trans. Math. Software\ }}
\newcommand\USSR{\emph{USSR Comput. Maths. Math. Phys.\ }}

\addcontentsline{toc}{chapter}{Bibliography}


\bibitem{AG05}
Alessandro Acquisti and Jens Grossklags.
Privacy and rationality in individual decision making.
\emph{IEEE Security }\&\emph{ Privacy.} {\bf 3}, 26--33, 2005.

\bibitem{ABK16}
Friedman Arik, Shlomo Berkovsky and Mohamed Ali Kaafar.
A differential privacy framework for
matrix factorization recommender systems.
\emph{User Modeling and User-Adapted Interaction.}
{\bf 26}(5), 425--458, 2016.

\bibitem{AS11}
Frederik Armknecht and Thorsten Strufe.
An efficient distributed privacy-preserving recommendation system.
\emph{Ad Hoc NETWORKING Workshop. IEEE.}
65--70, 2011.

\bibitem{BEK07}
Shlomo Berkvosky, Yaniv Eytani, Tsvi Kuflik and Francesco Ricci.
Enhancing privacy and preserving accuracy of a distributed collaborative filtering.
\emph{Conference on Recommender Systems.}
9--16, 2007.


\bibitem{BGS05}
Berendt Bettina, Oliver G\"unther and Sarah Spiekermann.
Privacy in e-commerce: stated preferences vs. actual behavior.
\emph{Communications of the ACM.} {\bf 48}, 101--106, 2005.

\bibitem{Can02}
John Canny.
Collaborative filtering with privacy.
\emph{Security and Privacy, 2002. Proceedings. 2002 IEEE Symposium on. IEEE.}
{\bf 18}(1), 45--57, 2002.

\bibitem{CM09}
Kamalika Chaudhuri and Claire Monteleoni.
Privacy-preserving logistic regression.
\emph{International Conference on Neural Information Processing Systems.} 289--296, 2009.



\bibitem{Dwo08}
Cynthia Dwork.
Differential privacy: A survey of results.
\emph{International Conference on Theory and Applications of Models
 of Computation. Springer, Berlin, Heidelberg.} 1--19, 2008.

\bibitem{DMN06}
Cynthia Dwork, Frank McSherry, Kobbi Nissim and Adam Smith.
Calibrating noise to sensitivity in private data analysis.
\emph{Theory of Cryptography Conference. Springer, Berlin, Heidelberg.}
265--284, 2006.

\bibitem{DR14}
Cynthia Dwork and Aaron Roth.
The algorithmic foundations of differential privacy.
\emph{Foundations and Trends in Theoretical Computer Science.}
{\bf 9}(3--4), 211--407, 2014.

\bibitem{FM09}
McSherry Frank and Ilya Mironov.
Differentially private recommender systems:
Building privacy into the netflix prize contenders.
\emph{ACM SIGKDD International Conference on Knowledge Discovery and Data Mining ACM.}
627--636, 2009.

\bibitem{HR12}
Moritz Hardt, Aaron Roth. Beating randomized response on
incoherent matrices.\emph{Proceedings of the
Forty-fourth Annual ACM Symposium on Theory of
Computing}, 1255--1268, 2012.

\bibitem{HXZ15}
Jingyu Hua, Chang Xia and Sheng Zhong.
Differentially private matrix factorization.
\emph{IJCAI.} 1763--1770, 2015.

\bibitem{JYC15}
Zach Jorgensen, Ting Yu and Graham Cormode.
Conservative or liberal? personalized
differential privacy.
\emph{Data Engineering (ICDE), 2015 IEEE 31st
International Conference on. IEEE.}
1023--1034, 2015.

\bibitem{KR12}
Joseph A. Konstan and John Riedl.
Recommender systems: from algorithms to user experience.
\emph{User Modeling and User-Adapted Interaction.}
{\bf 22}(1--2), 101--123, 2012.

\bibitem{Kor10}
Yehuda Koren.
Factor in the neighbors: Scalable and accurate collaborative filtering.
\emph{ACM Transactions on Knowledge Discovery from Data (TKDD).}
{\bf 4}(1), 1, 2010.

\bibitem{LLS16}
Ninghui Li, Min Lyu, Dong Su and Weining Yang.
Differential Privacy: From Theory to Practice.
\emph{Synthesis Lectures on Information Security Privacy }\&\emph{ Trust.}
{\bf 8}(4), 1--138, 2016.


\bibitem{LLW17}
Yongkai Li, Shubo Liu, Jun Wang and Mengjun Liu.
A Local-Clustering-Based Personalized Differential Privacy Framework
for User-Based Collaborative Filtering.
\emph{International Conference on Database
Systems for Advanced Applications. Springer, Cham.}
543--558, 2017.

\bibitem{MYL08}
Hao Ma, Haixuan Yang, Michael R. Lyu and Irwin King.
Sorec: social recommendation using probabilistic matrix factorization.
\emph{ACM Conference on Information and Knowledge Management.}
931--940, 2008.

\bibitem{NIW13}
Valeria Nikolaenko, Stratis Ioannidis, Udi Weinsberg, et al.
Privacy-preserving matrix factorization. \emph{Proceedings of the
2013 ACM SIGSAC Conference on Computer \& Communications
Security}, 801--812, 2013.

\bibitem{PB07}
Rupa Parameswaran and Douglas M Blough.
Privacy Preserving Collaborative Filtering Using Data Obfuscation.
\emph{IEEE International Conference on Granular Computing. IEEE Computer Society.}
380, 2007.

\bibitem{POV17}
Kairouz Peter, Sewoong Oh and Pramod Viswanath.
The composition theorem for differential privacy.
\emph{IEEE Transactions on Information Theory}
{\bf 63}(6), 4037--4049, 2017.


\bibitem{SM08}
Ruslan R. Salakhutdinov and Andriy Mnih.
Probabilistic Matrix Factorization.
\emph{International Conference on Neural Information Processing Systems.}
 1257--1264, 2008.

\bibitem{WVR06}
Jun Wang, Arjen P. de Vries and Marcel J.T. Reinders.
Unifying user-based and item-based collaborative filtering approaches by similarity fusion.
\emph{Proc. of, ACM SIGIR Conference on Information Retrieval.} 501--508, 2006.

\bibitem{XWG11}
Xiaokui Xiao, Guozhang Wang and Johannes Gehrke.
Differential privacy via wavelet transforms.
\emph{IEEE Transactions on Knowledge and Data Engineering.}
{\bf 23}(8), 1200--1214, 2011.

\bibitem{YZX17}
Mengmeng Yang, Tianqing Zhu, Yang Xiang and Wanlei Zhou.
Personalized Privacy Preserving Collaborative Filtering.
\emph{International Conference on Green, Pervasive,
and Cloud Computing. Springer, Cham.} 371--385, 2017.







\bibitem{ZQZ17}
Zijian Zhang, Zhan Qin, Liehuang Zhu, Jian Weng and Kui Ren.
Cost-friendly differential privacy for smart meters:
Exploiting the dual roles of the noise.
\emph{IEEE Transactions on Smart Grid.}
{\bf 8}(2), 619--626, 2017.


\bibitem{ZRZ14}
Tianqing Zhu, Yongli Ren, Wanlei Zhou, Jia Rong and Ping Xiong.
An effective privacy preserving algorithm for
neighborhood-based collaborative filtering.
\emph{Future Generation Computer Systems.}
{\bf 36}, 142--155, 2014.

\end{thebibliography}
\end{document}